\newif\if@restonecol
\newcommand{\ba}{\begin{array}}
	\newcommand{\ea}{\end{array}}
\newcommand{\be}{\begin{equation}}
\newcommand{\ee}{\end{equation}}
\newcommand{\bea}{\begin{eqnarray}}
\newcommand{\eea}{\end{eqnarray}}
\newcommand{\bean}{\begin{eqnarray*}}
	\newcommand{\eean}{\end{eqnarray*}}
\newcommand{\bc}{\begin{center}}
	\newcommand{\ec}{\end{center}}
\title{\LARGE \bf
 Fault Diagnosis of Discrete-Event Systems under\\  Non-Deterministic Observations with Output Fairness
 }
\author{Weijie Dong, Shang Gao, Xiang Yin and Shaoyuan Li
	\thanks{This work was supported by the National Natural Science Foundation of China (61803259, 61833012).}
	\thanks{W. Dong S. Gao,  X. Yin and S. Li  are with Department of Automation and Key Laboratory of System Control and Information Processing,
		Shanghai Jiao Tong University, Shanghai 200240, China.
		E-mail: {\tt\small  \{wjd\_dollar,gaoshang,yinxiang,syli\}@sjtu.edu.cn}.}
}
\newtheorem{mydef}{Definition}
\newtheorem{mythm}{Theorem}
\newtheorem{mylem}{Lemma}
\newtheorem{myexm}{Example}
\newtheorem{remark}{Remark}
\begin{document}

	\maketitle
	
\begin{abstract}
In this paper, we revisit the fault diagnosis problem of discrete-event systems (DES) under non-deterministic observations. 
Non-deterministic observation is a general observation model that includes the case of  intermittent loss of observations.
In this setting, upon the occurrence of an event, the sensor reading may be non-deterministic such that a set of output symbols are all possible. 
Existing works on fault diagnosis under non-deterministic observations require to consider  all possible observation realizations. 
However, this approach includes the case where some possible outputs are permanently disabled. 
In this work,  we introduce the concept of \emph{output fairness} by requiring that, for any output symbols, 
if it has infinite chances to be generated, then it will indeed be generated infinite number of times. 
We use an assume-guarantee type of linear temporal logic formulas to formally describe this assumption. 
A new notion called output-fair diagnosability (OF-diagnosability) is proposed. 
An effective approach is provided for the verification of OF-diagnosability. 
We show that the proposed notion of OF-diagnosability is weaker than the standard definition of diagnosability under non-deterministic observations, and 
it better captures the physical scenario of observation non-determinism or intermittent loss of observations. 
\end{abstract}
	
	\section{Introduction}
	Engineering cyber-physical systems (CPS), such as manufacturing systems, transportation systems and power systems, are generally very complex due to their intricate operation logic and hybrid dynamics.  
	For such large-scale safety-critical systems, failures during their operations are very common as millions of components/modules are working in parallel. 
	Therefore, failure diagnosis and detection is crucial but challenging task in order to monitor the operation conditions and to maintain safety for CPSs. 
 
	In this work, we investigate the fault diagnosis problem in the framework of discrete-event systems (DES), which are widely used in modeling the high-level logical behaviors of CPSs \cite{cassandras2008introduction}. 
	In the context of DES, the problem of fault diagnosis was initiated by \cite{sampath1995diagnosability},
	where the notion of \emph{diagnosability} was proposed. Specifically, it is assumed that the system's events are partitioned as observable events and unobservable events, 
	and
	a system is said  to be diagnosable if the occurrence of fault event can always be detected within a finite delay based on the observed sequence. To the past years, fault diagnosis of DES remains a hot topic due to its importance; see, e.g., some recent works \cite{lin2017n,yin2017decidability,ran2018codiagnosability, yin2019robust,viana2019codiagnosability,hu2021diagnosability}.
	The reader is referred to the comprehensive survey papers \cite{zaytoon2013overview,lafortune2018history} for more details on this topic. 

    In the modeling of DES, the occurrences of events are essentially observed by the corresponding sensors. 
    In practice, however, due to measurement noises, sensor failures or malicious attacks, 
    the sensor readings can be \emph{unreliable} or \emph{non-deterministic}.  
    Such non-deterministic observation issue was initially addressed in the context of \emph{intermittent loss of observations}  \cite{carvalho2012robust,lin2014control,boussif2021intermittent}, where it is assumed that some observable events are unreliable in the sense that their occurrences may be observed or be lost non-deterministically. 
    In \cite{takai2012verification}, Takai and Ushio investigated the issue of non-deterministic observation using Mealy automata. Specifically, the observation of the system is modeled by a \emph{state-dependent  non-deterministic output function}. 
    The model is quite general that captures the issue of intermittent loss of observations. 
    Furthermore, it allows the observation symbols to be different from the original event set. 
 
	Our work is motivated by the aforementioned works on fault diagnosis under intermittent loss of observations \cite{carvalho2012robust} or, in a broad sense, non-deterministic observations \cite{takai2012verification}. 
	In particular, we note that the existing   models  for non-deterministic observations essentially assume  that \emph{all possible  realizations under the non-deterministic output mapping are feasible}.  
	In the context of fault diagnosis, therefore, it needs to consider whether or not fault can be detected under all possible observation realizations. 
	We argue in this work that this setting is somehow too strict for the purpose of diagnosis since it may exclude the possibility for the occurrence of some possible sensor reading, which is ``unfair". 
	
	To see this ``unfairness" more clearly, let us consider the case of intermittent loss of observations. 
	Suppose that after the occurrence of fault events, there will be an observable but unreliable event occurring repeatably.
	Furthermore, once this indicator event is observed, we  know for sure that the fault has occurred. 
	In the   existing frameworks \cite{takai2012verification,carvalho2012robust}, this system is not diagnosable because we should consider the case where the observation of this unreliable event \emph{is lost each time when it occurs}. However, this scenario actually corresponds to the case that the underlying sensor has failed \emph{permanently}. Of course, such a permanent sensor failure is also possible in practice \cite{kanagawa2015diagnosability,takai2021general,carvalho2013robust,carvalho2021comparative}, 
	but it should be categorized differently from the setting of  intermittent loss of observations. 
	In other words, for such a sensor that is unreliable \emph{but we have prior information that it will not fail permanently}, the fault may be diagnosed with unreliability.
	To this end, Vinicius S. et al in \cite{oliveira2020k} proposed K-diagnosability which excludes permanent sensor failures by bounding the maximum number of consecutive failure for an unreliable sensor under a finite integer $k$. 
	However, in real-word engineering, we usually only can be sure that a sensor will not fail permanently and can repair after finite time delay, but we have no idea about what the upper bound of time needed to repair is. Thus, in this paper, we use the notion of "fair" to solve this problem.

	It is worth noting that, in \cite{biswal2015polynomial}, the authors investigated diagnosability for \emph{fair systems}. 
	However, the notion of fairness is different from our setting. 
	Specifically, \cite{biswal2015polynomial} assumes that the dynamic of the system is fair in the sense that 
	each transition can be executed for infinite number of times whenever it is enabled infinitely. 
	However, the observation mapping considered therein is still static modeled as a natural projection.
	In contrast, we  impose fairness constraint on the observation mapping rather than the internal behavior of the system 
	the system's dynamic.  Therefore, the problem setting in our work is quite different from that of \cite{biswal2015polynomial}.

	In this paper, we  revisit the fault diagnosis problem under non-deterministic observations. 
	Motivated by the above discussion, however, to better capture the physical scenario in which \emph{all possible observations are always available}, 
	we further require that each possible observation (or output symbol) is \emph{fair} in the sense that if it has infinite chances to be observed then it will indeed be observed infinite number of times. 
    To formally describe this fairness requirement, we use an assume-guarantee type linear temporal logic (LTL) formulas. 
	Then, we propose a new condition called \emph{output-fair diagnosability} (OF-diagnosability) that provides the necessary and sufficient condition for the existence of a diagnoser that works correctly under the fair-observation setting. 
	Also, we utilize the structure properties of fairness requirement to provide an more effective procedure for checking this new condition. 
	Our work bridges the gap between the existing mathematical definition of diagnosability under intermittent loss of observations and the physical setting, where those unreliable sensors will not fail permanently. 

	\section{Preliminary}\label{sec:2}
	\subsection{System Model}
	Let $\Sigma$ be a finite set of events. 
	A finite (infinite) string over $\Sigma$  is  a finite (infinite) sequence of events in $\Sigma$ of form $s=\sigma_1\sigma_2 \dots \sigma_n (\dots)$,  where $\sigma_i\in \Sigma$, for $i = 1,\cdots , n$. 
	A finite (infinite) language is a set of finite (infinite) strings. 
	We denote by $\Sigma^*$ and $\Sigma^\omega$, respectively, the set of all finite and the set of all infinite strings over $\Sigma$. 
	Note that the empty string, denoted by $\varepsilon$, is included in $\Sigma^*$. 
	Given a finite language $L \subseteq \Sigma^*$, the prefix-closure of $L$ is defined by 
	$\textsf{Pre}(L)=\{w \in \Sigma^*: \exists t \in \Sigma^* \text{ s.t. }  w t\in L\}$. 
	Similarly, for an infinite language $L\subseteq  \Sigma^\omega$,  
	its prefix-closure is the set of all its finite prefixes, i.e., 
	$\textsf{Pre}(L)=\{w \in \Sigma^*:\exists t \in \Sigma^\omega \text{ s.t. } w t \in L\}$. 
	For any string $s\in \Sigma^*\cup \Sigma^\omega$, we write $\textsf{Pre}(\{s\})$ as $\textsf{Pre}(s)$ for the sake of simplicity. 
	Given an infinite string $s \in \Sigma^\omega$,  $\textsf{Inf}(s)$ denotes the set of events that appear infinitely number of times in the string.
	
In this paper, we consider DES modeled by a deterministic finite-state automaton (DFA) 
$G = (Q,\Sigma,f,q_0)$, 
where  $Q$ is the finite set of states;
	 $\Sigma$ is the finite set of events;
	  $f:Q\times \Sigma \to Q$ is the partial transition function; and 
	   $q_0\in Q$ is the initial state.  
The transition  function $f$ can be extended to $f:Q\times \Sigma^* \to Q$ recursively in the usual manner; see, e.g.,  \cite{cassandras2008introduction}. 
The finite language generated by $G$ is defined by $\mathcal{L}(G)=\{s\in \Sigma^*:f(q_0,s)!\}$, where ``!" means ``is defined", 
and the infinite language generated by $G$ is defined by $\mathcal{L}^\omega(G)=\{s\in \Sigma^\omega:f(q_0,s)!\}$. 
We assume that system $G$ is live, i.e., $\forall q\in Q,\exists \sigma: f(q,\sigma)!$. 

\subsection{Non-Deterministic Observations}
In the partial observation setting, it is assumed that the occurrence of each event cannot be observed directly or perfectly. 
A commonly adopted simple approach  is to partition the event set $\Sigma$ into   observable events $\Sigma_o$ and unobservable events $\Sigma_{uo}$. 
Then natural projection $P:\Sigma^*\to \Sigma_{o}^*$ can be used to capture the issue of partial observability.

In many real-world scenarios, however, the sensor readings may   be non-deterministic due to observation noises, sensor failures or malicious attacks. 
Furthermore, the sensor reading of the  event  occurrence can be \emph{state-dependent}. 
To this end, more complicated observation models have been proposed in the literature. 
Here, we adopt state-dependent non-deterministic observation model proposed by \cite{takai2012verification,ushio2015nonblocking}, 
which captures both  state-dependency and  non-determinism of observations.  

Formally, we assume that $\Delta$ is a new set of all possible observations or \emph{output   symbols}. 
Then a state-dependent non-deterministic output function is 
\[
\mathcal{O}: Q\times \Sigma \to 2^{\Delta_{\varepsilon}}, 
\]
where $\Delta_{\varepsilon}=\Delta \cup \{\varepsilon\}$. 
Intuitively, this output function means that 
if event $\sigma\in\Sigma$ occurs at state $q\in Q$, 
then the system is possible to observe any symbol in $\mathcal{O}(q,\sigma)$ non-deterministically. 

\begin{remark} \label{rem:capture intermittent}
The above non-deterministic observation model is quite general in the sense it subsumes many observation models in the literature. 
For example, the standard natural-projection-based observation   can be formulated by 
setting $\mathcal{O}(q,\sigma)=\{\sigma\}$ for all $\sigma\in \Sigma_o$
and $\mathcal{O}(q,\sigma)=\{\varepsilon\}$ for all $\sigma\in \Sigma_{uo}$. 
Also, it captures the so-called \emph{intermittent loss of observations} \cite{carvalho2012robust}. 
In this setting,  the event set is usually partitioned as 
$\Sigma=\Sigma_{r} \dot{\cup} \Sigma_{ur} \dot{\cup} \Sigma_{uo}$, 
where $\Sigma_{r}$ is the set of reliable events whose occurrences can  always be observed directly, 
    $\Sigma_{ur}$ is the set of unreliable events whose occurrences may be observed but can also be lost and 
    $\Sigma_{uo}$ is the set of unobservable events whose occurrences can  never be observed. 
This setting can be captured by considering  a  non-deterministic output function with  $\Delta=\Sigma$ and 
for any $q \in Q$ and  $\sigma \in \Sigma$, we have
	\[
		\mathcal{O}(q,\sigma)\!=\!
		\left\{\!\!\!\! 
		\begin{array}{l l}
			\{\sigma\} &\text{if } \sigma \!\in\! \Sigma_r \\
			\{\sigma,\varepsilon\} &\text{if } \sigma \!\in\! \Sigma_{ur} \\ 
			\{\varepsilon\} &\text{if } \sigma \!\in\! \Sigma_{uo} 
		\end{array} 
		\right.
	\]
Note that, for the   general case  we consider here,  the output symbols $\Delta$ can be different from the original event set $\Sigma$. 
\end{remark}

Based on the output function $\mathcal{O}: Q\times \Sigma \to 2^{\Delta_{\varepsilon}}$, 
we can define a non-deterministic observation mapping $M:\mathcal{L}(G) \to 2^{\Delta^*}$, where $\Delta^*$ is the set of finite strings over $\Delta$ and we have $\varepsilon \in \Delta^*$, recursively as:
\begin{itemize}
	\item $M(\varepsilon)=\{\varepsilon\}$;
	\item 
	for any $s\in \Sigma^*$ and $\sigma\in \Sigma$, we have 
	\[
	M(s \sigma)
	=\{\alpha \beta \!\in\! \Delta^*: \alpha \!\in\! M(s) \wedge \beta\in \mathcal{O}(f(q_0,s),\sigma) \}
	\]
\end{itemize}
Intuitively, $M(s)\in \Delta^*$ is the set of all possible observations (or output strings) upon the occurrence of internal string $s \in \Sigma^*$. 
The  observation mapping is also extended to $M: 2^{\Sigma^*} \to 2^{\Delta^*}$ by: for any language  $L \subseteq \mathcal{L}(G)$, $M(L)=\cup_{ s \in L}  M(s)$. 

Since the observation is non-deterministic, 
for any specific internal string $s\in \mathcal{L}(G)$, it may have different \emph{output realizations}. 
To ``embed" the actual observation occurred into the internal execution of the system, 
we define 
\[
\Sigma_e=Q\times \Sigma\times \Delta_\varepsilon
\]
as the set of \emph{extended events}. 
Then an extended string is a finite or infinite sequence of extended events. 
We say a finite extended string
\[
s=(q_0,\sigma_0,\delta_0)(q_1,\sigma_1,\delta_1)\dots (q_n,\sigma_n,\delta_n) \in \Sigma_e^*
\]
is generated by $G$ if 
$f(q_i,\sigma_i)=q_{i+1},\forall i<n$ and $\delta_i\in \mathcal{O}(q_i,\sigma_i),\forall i\leq n$. 
We denote by $\mathcal{L}_e(G)$ the set of all finite extended strings generated by $G$. 
The infinite extended strings are defined analogously and the set of all infinite extended strings generated by $G$
is denoted by $\mathcal{L}^{\omega}_e(G)$. 

For any extended string
$s=(q_0,\sigma_0,\delta_0) (q_1,\sigma_1,\delta_1)\cdots$, 
we define $\Theta_{Q}(s)=q_0 q_1\cdots, \Theta_{\Sigma}(s)=\sigma_0\sigma_1 \cdots$ and $\Theta_{\Delta}(s)=\delta_0\delta_1\cdots$ as its corresponding state sequence, (internal) event string and output string, respectively.
Clearly, for any  $s \in \mathcal{L}_e(G)$, we have $\Theta_{\Delta}(s) \in M(\Theta_{\Sigma}(s))$
because $\Theta_{\Delta}(s)$ is a specific  observation realization of internal string $\Theta_{\Sigma}(s)$.  
\begin{figure} 
	\centering
	\centering
	\includegraphics[width=3.5cm]{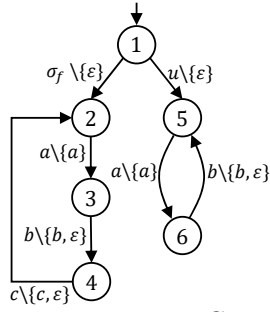}
	\caption{\label{fig:G-1}System $G_1$.}
\end{figure}
\begin{myexm} \label{exam:system example}
	Let us consider system $G_1$ in Figure~\ref{fig:G-1}, where we have $\Sigma=\{a,b,c,\sigma_{\!f},u\}$ and $\Delta=\{a,b,c\}$. 
	The output function $\mathcal{O}: Q\times \Sigma \!\to\! 2^{\Delta_{\varepsilon}}$ is specified by the label of each transition, 
	where the LHS of   ``$\backslash $" denotes the internal event 
	and  the RHS of   ``$\backslash $" denotes the set of all possible output symbols. 
	For example, $b \backslash  \{b,\varepsilon\}$ from state $3$ to state $4$ means that system moves to state $4$ from state $3$ by firing event $b$, i.e., $f(3,b)=4$, and we can non-deterministically observe any output in set $\mathcal{O}(3,b)= \{b,\varepsilon\}$. 
	In this example, we either observe $b$ itself or observe nothing, which corresponds to the case of intermittent loss of observations. 
	Then for finite string  $\sigma_{\!f} a b \in \mathcal{L}(G_1)$, 
	the set of all possible output is $M(\sigma_{\!f} a b)=\{ab,a\}$.
	These two different outputs lead to the following two extended strings   $(1,\sigma_{\!f},\varepsilon)(2,a,a)(3,b,b)$ and  $(1,\sigma_{\!f},\varepsilon)(2,a,a)(3,b,\varepsilon)$, respectively.
\end{myexm}


\subsection{Fault Diagnosis}
In this work, we assume that the system is subject to faults whose occurrences need to be diagnosed within a finite number of steps. 
Specifically,  we assume that $\Sigma_F \subset \Sigma$ is the set of fault events. 
For the sake of simplicity, we do not distinguish among different fault types. 
We say a string $s\in \Sigma^*\cup\Sigma^\omega$ is faulty if some event in $\Sigma_F$ appears in $s$
and we write $\Sigma_F\in s$ with a slight abuse of notation; otherwise, it is normal. 
We define $\mathcal{L}_F(G)$ and $\mathcal{L}_F^\omega(G)$ as the sets of all finite and infinite faulty strings generated by $G$, respectively. 
Similarly, we define $\Sigma_{e,F}= Q\times \Sigma_F\times \Delta^\varepsilon$ as the set of extended fault events 
and denote by  $\mathcal{L}_{e,F}(G)=\{s\in \mathcal{L}_{e}(G): \Sigma_{e,F}\in s\}$ the set of all finite extended faulty strings generated by $G$; the extended infinite faulty language $\mathcal{L}_{e,F}^\omega(G)$ is also defined analogously. 
Finally, we define $\Psi_e(G)$ as the set of all finite extended faulty strings in which fault events occur \emph{for the first time}, i.e., 
\begin{align}
	\Psi_e(G)=\{s \in \mathcal{L}_{e,F}(G): \forall t \in \textsf{Pre}(s) \setminus \{s\},  \Sigma_{e,F} \notin  t \}. \nonumber
\end{align} 
To capture whether or not the occurrences of fault events can be detected within a finite number of steps, 
the notion of diagnosability (under non-deterministic observations) has been proposed in the literature \cite{takai2012verification}. 
\begin{mydef} \label{def:classical diag}
System $G$ is said to be \emph{diagnosable} w.r.t.\  output function $\mathcal{O}$ and fault events $\Sigma_F$ if 
\begin{equation} \label{eq:classical diagnosis}
  (\forall s \!\in\! \Psi_e(G) ) (\exists n\!\in\! \mathbb{N}) (\forall st  \!\in\! \mathcal{L}_e(G)) 
  [ |t| \geq n \Rightarrow \textsf{diag} ]
\end{equation}
where the diagnostic condition $\textsf{diag}$ is 
\[
  (\forall \omega \in \mathcal{L}_e(G))  
  [\Theta_{\Delta}(\omega)=\Theta_{\Delta}(st) \Rightarrow    \Sigma_{e,F}\in \omega]. \vspace{8pt}
\] 
\end{mydef}

Intuitively, the above definition says that,  
for any faulty extended string in which fault events appear for the first time, there exists a finite detection bound such that,   
for any of its continuation longer than the detection bound, 
any other extended strings having the same observation must also contain fault events. Note that we consider extended strings rather than the internal strings in order to capture the issue of non-deterministic observations.

\begin{myexm} \label{exam:classical not diagnosable}
Again, we consider system $G_1$ depicted in Figure~\ref{fig:G-1} with $\Sigma=\{a,b,c,\sigma_{\!f},u\} , \Delta=\{a,b,c\}$ and $\Sigma_F=\{\sigma_{\!f}\}$. 
Let us consider faulty extended string  $(1,\sigma_{\!f},\varepsilon) \in \Psi_e(G_1)$, which can be extended   arbitrarily long as
\[
s_F=(1,\sigma_{\!f}, \varepsilon)[(2,a,a)(3,b,b)(4,c,\varepsilon)]^n
\]
However, for any $n$, we can find a normal extended string 
\[
s_N=(1,u, \varepsilon)[(5,a,a)(6,b,b)]^n
\]
such that $\Theta_{\Delta}(s_F)=\Theta_{\Delta}(s_N)=(ab)^n$. 
Therefore, we know that system $G_1$ is not diagnosable under $\mathcal{O}$.
\end{myexm}

\section{Diagnosability with Always-Fairness Assumption}\label{sec:3}

\subsection{Motivating Example}
As we discussed in Remark~\ref{rem:capture intermittent}, the non-deterministic output function can be used to capture the issue of \emph{intermittent loss of observations}. 
Here, however,  we argue that the original definition of diagnosability in Definition~\ref{def:classical diag} may be too strong for the case of intermittent loss of observations. 

To see this, we still consider system $G_1$ shown in Figure~\ref{fig:G-1}, where the reliable event set is $\{a\}$, the unreliable events set is $\{b,c\}$ and the unobservable event set is $\{\sigma_{\!f},u\}$. 
As we have discussed in Example~\ref{exam:classical not diagnosable}, this system is not diagnosable because 
there are two infinite extended strings:
\begin{itemize}
    \item 
    one is faulty $s_F=(1,\sigma_{\!f}, \varepsilon)[(2,a,a)(3,b,b)(4,c,\varepsilon)]^\omega$; 
    \item 
    the other is non-faulty $s_N =(1,u, \varepsilon)[(5,a,a)(6,b,b)]^\omega$
\end{itemize}
and they have the same observation, i.e., $\Theta_{\Delta}(s_1)=\Theta_{\Delta}(s_2)$ $=(ab)^\omega$. 
As a result, the occurrence of fault in string $s_F$ can never be detected within a finite number of steps. 

Note that, the existence of above extended string $s_F$ is due to the following physical scenario: 
the infinite internal string $\sigma_{\!f}(abc)^\omega$ is executed, i.e., the system loops forever in the cycle formed by states $2,3$ and $4$, 
and each time when event $c$ is executed at state $4$, the sensor reads $\varepsilon$ due to observation loss.  
In other words, the sensor corresponds to transition $4\xrightarrow{c}$ has to  fail \emph{permanently} in order to draw the conclusion that the system is not diagnosable.  
However, this source of non-diagnosability seems violate the setting of intermittent loss of observations. 
In the context of intermittent loss of observations or non-deterministic observations, 
it makes more sense to assume that each possible observation is \emph{eventually possible}. 
Therefore, if event $c$ at state $4$ corresponds to a sensor that may be unreliable but will not fail permanently, 
then along the infinite loop of $\sigma_{\!f}(abc)^\omega$, once one has a single chance to observe output $c$ for transition $4\xrightarrow{c}$, 
it will conclude immediately that fault events have occurred.

\subsection{$\Delta'$-Fairness Assumption}
The above discussion suggests that the exiting definition of diagnosability under non-deterministic observations is a bit strong than its underlying physical setting because it includes the situation where some output symbols are disabled permanently. 
In order to bridge this discrepancy between the definition of diagnosability and the physical meaning of non-deterministic observations, in this work, 
we put an additional \emph{fairness} assumption on those sensors that are subject to intermittent loss of observations, or non-deterministic observations in a broad sense. 
Specifically, we assume that $\Delta'\subseteq \Delta$ is a set of ``fair" output symbols 
in the sense that if they have infinite opportunities to occur, then they will indeed occur infinite number of times.  

In order to formalize this fairness requirement, we use the linear temporal logic (LTL) formulas. 
Formally,  an LTL formula 
$\varphi$ is constructed based on a set of atomic propositions $\mathcal{AP}$, Boolean operators and temporal operators as follows:
\[
\varphi::=true\ |\ p\ |\ \varphi_1 \wedge \varphi_2\ |\ \neg \varphi\ |\  \bigcirc \varphi \ |\  \varphi_1 U \varphi_2 ,
\]
where $p\in \mathcal{AP}$ is an atomic proposition; $\bigcirc$ and $U$ denote ``next" and ``until", respectively. Other Boolean connectives can be induced by $\wedge$ and $\neg$, e.g., $\varphi_1 \vee \varphi_2 = \neg (\neg \varphi_1 \wedge \neg \varphi_2)$ and $\varphi_1 \rightarrow \varphi_2 = \neg \varphi_1 \vee \varphi_2$. 
Temporal operators $\Box$  ``always"  and $\lozenge$  ``eventually"  can be induced by until operator, i.e., $\lozenge \varphi = true U \varphi$ and $ \Box \varphi = \neg \lozenge \neg \varphi$. 
 LTL  formulas are evaluated over infinite sequences of atomic proposition sets (called words). 
For any infinite word  $s \in (2^\mathcal{AP})^\omega$, 
we denote by $s \models \varphi$ if it satisfies LTL  formula $\varphi$. 
The reader is referred to \cite{baier2008principles}  for more details on the semantics of LTL. 

In our context of non-deterministic observations, we choose extended events as atomic propositions, 
i.e., $\mathcal{AP}=\Sigma_e$.  
For the sake of simplicity, each extended event $(q,\sigma,\delta)$ will also be written as $\sigma_q^\delta$ meaning that 
event $\sigma$ occurs at state $q$ and generates observation $\delta$. 
For simplicity, 
we define 
\[
\sigma_q:=  \bigvee_{\delta\in \mathcal{O}(\sigma,q)} \sigma_q^\delta
\]
as the proposition formula meaning that transition $q\xrightarrow{ \sigma}$ occurs  no matter what output it generates.
Then we introduce the notion of $\Delta'$-fairness as follows. 

\begin{mydef} \label{def:OF}  	($\Delta'$-Fairness) 
Let $\Delta'\subseteq \Delta_\varepsilon$ be a set of output symbols. 
We say an infinite extended string $s \in \mathcal{L}_e^\omega(G)$ is   $\Delta'$-fair if 
\begin{equation}\label{eq:LTL-form}
s\models \varphi_{fair}\!:=\!
\bigwedge_{q\in Q, \sigma\in \Sigma} 
\left( \Box \lozenge \sigma_q  \to    \bigwedge_{\delta\in \Delta' \cap \mathcal{O}(q,\sigma)}  \Box \lozenge \sigma_q^\delta     \right)
\vspace{8pt}
\end{equation}
\end{mydef} 

The above subset of outputs $\Delta'\subseteq \Delta_\varepsilon$ is referred to as the \emph{fair outputs} meaning that these outputs will always have the opportunity to be measured. 
Specifically, this requirement is captured by   formula $\varphi_{fair}$   saying that, 
for any transition $q\xrightarrow{ \sigma}$, if it is fired infinite number of times, 
then any of its fair outputs, i.e., $\sigma\in \Delta'\cap \mathcal{O}(q,\sigma)$, can actually be observed infinite number of times. 
This excludes the case where some fair outputs are disabled permanently. 
We define 
\[
\mathcal{L}_{e}^{\varphi}(G)=\{s \in \mathcal{L}^{\omega}_{e}(G): s \models  \varphi_{fair} \}
\]
as the set of infinite extended strings generated by $G$ satisfying $\varphi_{fair}$. 
We also define $\mathcal{L}^{\varphi}_{e,F}(G)=\mathcal{L}^{\varphi}_{e}(G) \cap \mathcal{L}_{e,F}^\omega(G)$
as the set of infinite faulty extended strings satisfying $\varphi_{fair}$.

\begin{remark}
Throughout the paper, we will only consider LTL formulae in the form of Equation~\eqref{eq:LTL-form} to capture the fairness requirement.  
Compared with the general form of LTL,  Equation~\eqref{eq:LTL-form} is an assume-guarantee type of formula. Later,  we will utilize this  structural property for the purpose of checking diagnosability. 
\end{remark}

\subsection{OF-Diagnosability}
Based on the above notion of $\Delta'$-fairness on  extended strings, now we modify the existing definition of diagnosability as shown in Definition~\ref{def:classical diag} to the \emph{output-fair diagnosability} (OF-diagnosability) defined as follows. 
\begin{mydef}\label{def:OF-diagnosability}
(OF-diagnosability) 
System $G$ is said to be \emph{output-fairly diagnosable} (OF-diagnosable) w.r.t.\  fault events $\Sigma_F$, output function $\mathcal{O}$ and fair outputs $\Delta'\subseteq \Delta_\varepsilon$ if 
\begin{equation} \label{eq:of-diagnosis}
 (\forall s \in \mathcal{L}^{\varphi}_{e,F}(G))(\exists t \in \textsf{Pre}(s))[ \textsf{fair-diag} ]  
\end{equation}
where the fair-diagnostic condition $\textsf{fair-diag}$ is 
\[
  (\forall w \in   \mathcal{L}_e(G)   )  
  [\Theta_{\Delta}(w)=\Theta_{\Delta}(t) \Rightarrow    \Sigma_{e,F}\in w]. \vspace{8pt}
\]   
\end{mydef}\vspace{4pt}

Intuitively, OF-diagnosability revises the standard diagnosability by restricting our attention only to those infinite extended strings satisfying the fairness assumption 
and investigates whether or not faults in those ``output-fair" strings can be detected. 

\begin{remark}
In Definition \ref{def:classical diag}, it is known that 
``$\forall s \!\in\! \Psi_e(G)$" and  ``$\exists n\!\in\! \mathbb{N}$"
can be swapped, which means that if the system is diagnosable, then there exists a uniform detection bound after the occurrence of any fault events. 
However, in Definition \ref{def:OF-diagnosability}, 
the length of detection prefix $t$ can be arbitrarily long, 
depending on how the fairness assumption is satisfied in the specific infinite faulty string $s \in \mathcal{L}^{\varphi}_{e,F}(G)$, 
since the  $\Delta'$-fairness assumption only guarantees that   all fair outputs \emph{eventually} occur but the delay can be arbitrarily large.
\end{remark}

We show that the proposed notion of a OF-diagnosability provides the necessary and sufficient condition for the existence of diagnoser that works ``correctly" under the $\Delta'$-fairness assumption. 
Formally, a diagnoser is a function
\[
D: M(\mathcal{L}(G)) \to \{0,1\}
\]
that decides whether a fault has happened (by issuing ``$1$") or not (by issuing ``$0$") based on the output string. 
We say that a diagnoser works correctly under the $\Delta'$-fairness assumption if it satisfies the following conditions:
\begin{enumerate}[C1)]
	\item 
	By assuming that each fair-output will actually be observed infinitely if they have infinite chances to be observed, the diagnoser will issue a fault alarm for any occurrence of fault events, i.e., 
	\[
	(\forall s  \in \mathcal{L}^{\varphi}_{e,F}(G))(\exists s' \in \textsf{Pre}(s)))[D(\Theta_{\Delta}(s'))=1].\label{C1}
	\]
	\item 
	The diagnoser will not issue a false alarm if the execution is still normal, i.e., 
	\[
	(\forall s \in \mathcal{L}_{e}(G): \Sigma_{e,F} \notin s) [D(\Theta_{\Delta}(s)) = 0].
	\label{C2}
	\]
\end{enumerate}

The following theorem says that there exists a diagnoser working ``correctly" under the $\Delta'$-fairness assumption if and only if the system is OF-diagnosable.
\begin{mythm}\label{thm:OF-existence}
There exists a diagnoser   satisfying conditions C\ref{C1} and C\ref{C2} 
if and only if $G$ is OF-diagnosable w.r.t.\  fault events $\Sigma_F$, output function $\mathcal{O}$ and fair outputs $\Delta'\subseteq \Delta_\varepsilon$.
\end{mythm}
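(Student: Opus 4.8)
The plan is to prove the two directions separately, with the ``if'' direction being the constructive, meaty part. For the ``only if'' direction, suppose $G$ is not OF-diagnosable. Then by Definition~\ref{def:OF-diagnosability} there is an infinite faulty extended string $s \in \mathcal{L}^{\varphi}_{e,F}(G)$ such that for every prefix $t \in \textsf{Pre}(s)$ there is some $w \in \mathcal{L}_e(G)$ with $\Theta_{\Delta}(w) = \Theta_{\Delta}(t)$ but $\Sigma_{e,F} \notin w$. I would argue that any diagnoser $D$ must fail C\ref{C1} on this very $s$: for each prefix $s' \in \textsf{Pre}(s)$, the normal string $w$ witnessing non-diagnosability at $t = s'$ has the same observation $\Theta_{\Delta}(w) = \Theta_{\Delta}(s')$, so C\ref{C2} forces $D(\Theta_{\Delta}(s')) = D(\Theta_{\Delta}(w)) = 0$; since this holds for \emph{every} prefix $s'$, C\ref{C1} is violated. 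Hence no correct diagnoser exists, which is the contrapositive of the claim.

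For the ``if'' direction, assume $G$ is OF-diagnosable and construct a diagnoser. The natural candidate is the standard ``current-state-estimate'' diagnoser lifted to the extended/observation setting: for an observed output string $\gamma \in M(\mathcal{L}(G))$, define $D(\gamma) = 1$ if every $w \in \mathcal{L}_e(G)$ with $\Theta_{\Delta}(w) = \gamma$ satisfies $\Sigma_{e,F} \in w$, and $D(\gamma) = 0$ otherwise. Condition C\ref{C2} is then immediate: if $s \in \mathcal{L}_e(G)$ is normal, then $s$ itself is a witness that not every string with observation $\Theta_{\Delta}(s)$ is faulty, so $D(\Theta_{\Delta}(s)) = 0$. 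For C\ref{C1}, take any $s \in \mathcal{L}^{\varphi}_{e,F}(G)$; OF-diagnosability gives a prefix $t \in \textsf{Pre}(s)$ for which the condition $\textsf{fair-diag}$ holds, i.e.\ every $w \in \mathcal{L}_e(G)$ with $\Theta_{\Delta}(w) = \Theta_{\Delta}(t)$ is faulty — which is exactly the definition of $D(\Theta_{\Delta}(t)) = 1$. Taking $s' = t$ establishes C\ref{C1}.

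The main subtlety — and where I would be most careful — is a matching/bookkeeping point rather than a deep one: in C\ref{C1} the witness prefix $s'$ is required to lie in $\textsf{Pre}(s)$ with $D$ applied to $\Theta_{\Delta}(s')$, whereas the quantifier structure in Definition~\ref{def:OF-diagnosability} already hands us a prefix $t \in \textsf{Pre}(s)$, so the two line up verbatim once we observe that $\Theta_{\Delta}$ commutes with taking prefixes of extended strings. One should also check that $D$ is well defined as a function on $M(\mathcal{L}(G))$, i.e.\ that its value depends only on the observed output string and not on any chosen extended realization; this is automatic since the defining condition quantifies over \emph{all} extended strings sharing that observation. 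Finally, I would note that monotonicity of $D$ along a fixed string (once it issues $1$ it may continue to do so, or one may simply keep the latch) is not needed for the statement as phrased, since C\ref{C1} only demands the existence of \emph{some} prefix on which the alarm fires; if a latched diagnoser is desired one redefines $D(\gamma) = 1$ whenever some prefix of $\gamma$ already triggers the unlatched condition, and C\ref{C2} is preserved because a normal extended string has only normal prefixes.
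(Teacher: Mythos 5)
Your proposal is correct and follows essentially the same route as the paper: the ``only if'' direction argues via the non-diagnosability witness $s$ that C2 forces $D(\Theta_{\Delta}(s'))=0$ on every prefix, contradicting C1, and the ``if'' direction constructs exactly the same diagnoser $D(\gamma)=1$ iff every extended string with observation $\gamma$ is faulty, verifying C2 by the self-witness argument and C1 via the prefix $t$ supplied by OF-diagnosability. The additional remarks on well-definedness and latching are sound but not needed.
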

\begin{proof}
($\Rightarrow$) 
Suppose that there exists a diagnoser $D$ satisfying conditions C1 and C2, while, by contradiction, system $G$ is not OF-diagnosability. 
This means that there exist an infinite faulty extended string 
$s \in \mathcal{L}_{e,F}^{\varphi}(G)$ 
such that for any prefix $t\in \textsf{Pre}(s)$, there exists a normal extended string 
$w \in \mathcal{L}_e(G)$ having the same observation with $t$. 
Since diagnoser $D$ satisfies condition C2, 
for any $t \in \textsf{Pre}(s)$, we get $D(\Theta_{\Delta}(t))=0$; otherwise, if $D(\Theta_{\Delta}(t))=1$, we get $D(\Theta_{\Delta}(\omega))=D(\Theta_{\Delta}(t))=1$, which violates condition C2.
Therefore, for any $t \in \textsf{Pre}(s)$, we get $D(\Theta_{\Delta}(t))=0$. As a result, condition C1 does not hold for diagnoser $D$, which contradicts the hypothesis.

($\Leftarrow$) Suppose that the system $G$ is OF-diagnosable. We consider a diagnoser $D:M(\mathcal{L}(G)) \to \{0,1\}$ defined by: for any $\omega \in \mathcal{L}_e(G)$
\begin{equation} \label{eq:dia}
	\begin{split}
		&D(\Theta_{\Delta}(\omega))= \\
		&\left\{
		\begin{array}{lr}
			1  &\text{if  } \forall s \in \mathcal{L}_e(G): \Theta_{\Delta}(\omega) = \Theta_{\Delta}(s) \Rightarrow \Sigma_{e,F} \in \Theta_{\Sigma}(s)  \\
			0  &otherwise.
		\end{array}
		\right. \\  
	\end{split}
\end{equation}
We claim that the diagnoser given by equation \eqref{eq:dia} satisfies condition C1 and C2.
We first show that the diagnoser satisfies condition C2. For any normal extended language $s \in \mathcal{L}_e(G) \backslash \mathcal{L}_{e,F}(G)$, by equation \eqref{eq:dia}, we have $D(\Theta_{\Delta}(s))=0$, that is, C2 holds.
To see that C1 holds, we consider any faulty extended string $s \in \mathcal{L}^{\varphi}_{e,F}(G)$. 
By OF-diagnosability, there exists a finite prefix $t \in \textsf{Pre}(s)$ such that any finite normal extended language $\omega$ having the same observation with $t$ is faulty, i.e., $\Sigma_{e,F} \in \Theta_{\Sigma}(\omega)$. Then, by equation \eqref{eq:dia}, we have $D(\Theta_{\Delta}(\omega))=1$, i.e., condition C1 also holds.
\end{proof}

We use  the following examples to illustrate the concept of output fairness as well as the notion of OF-diagnosability. 
 
\begin{myexm} \label{exm:3}
Again, let us consider system $G_1$ shown in Figure~\ref{fig:G-1} with $\Sigma_F=\{\sigma_{\!f}\}$ and suppose that  the fair outputs are   $\Delta'=\{b,c\}$. 
Then by Definition \ref{def:OF}, the $\Delta'$-fairness assumption is 
\[
	\varphi_{fair}
=
(   \bigwedge_{i=3,6}  
\Box \lozenge  b_i  \to  \Box \lozenge b^{b}_i
)
\wedge 
\left(\Box\lozenge  c_4  \to  \Box \lozenge c^{c}_4\right)
\]
Note that for infinite faulty  extended string 
\[
s_F=(1,\sigma_{\!f}, \varepsilon)[(2,a,a)(3,b,b)(4,c,\varepsilon)]^\omega,
\]
we have $s_F\not\models 	\varphi_{fair}$ 
because $\Box \lozenge  c_4 $ holds but $\Box \lozenge c^{c}_4$ does not hold. 
Therefore,  $s_F \not\in \mathcal{L}^{\varphi}_{e,F}(G)$ is not considered in the analysis of OF-diagnosability. 
Clearly, for any faulty string in $\mathcal{L}^{\varphi}_{e,F}(G)$, 
$c^c_4$ must occur, which means output $c$ will be observed and upon the occurrence of which the fault can be detected.
Therefore, this system is actually OF-diagnosable.   
\end{myexm}

\section{Verification of OF-diagnosability}\label{sec:4}
In this section, we investigate the verification of OF-diagnosability. 
Specifically, we provide a verifiable necessary and sufficient condition for OF-diagnosability based on the verification structure.

\subsection{Augmented System} 
To verify OF-diagnosability, our first step is to augment both the state-space and the event-space of $G$ such that 
\begin{itemize}
    \item 
    the information of whether or not a fault event has occurred is encoded in the augmented state-space; and 
    \item 
    the information of where the event occurs and  which specific output is observed are encoded in the augmented event-space. 
\end{itemize}

Formally, given  system $G = (Q,q_0,\Sigma,f)$, fault events $\Sigma_F$ and output function $\mathcal{O}$, we define the \emph{augmented system} as a new DFA 
\begin{equation} \label{eq:DFA}
	\tilde{G}=(\tilde{Q},\tilde{q}_0,\Sigma_e,\tilde{f}),
\end{equation}
where 
\begin{itemize}
	\item 
	$\tilde{Q} \subseteq Q \times \{F,N\}$ is the set of augmented states;
	\item 
	$\tilde{q}_0=(q_0,N)$ is the initial augmented state;
	\item 
	$\Sigma_e$ is the set of augmented events, which are just extended events;
	\item 
	$\tilde{f}:\tilde{Q} \times \Sigma_e \rightarrow \tilde{Q}$ is the transition function defined by: 
	for any $\tilde{q}=(q,l)\in \tilde{Q}$ and $\tilde{\sigma}=(q,\sigma,\delta) \in \Sigma_e$, we  have 
	$\tilde{f}(\tilde{q},\tilde{\sigma})!$ if 
	$f(q,\sigma)!$ and $\delta\in \mathcal{O}(q,\sigma)$. 
	Furthermore, when $\tilde{f}(\tilde{q},\tilde{\sigma})!$, we have 
	\[
	\tilde{f}(\tilde{q},\tilde{\sigma})=
		\left\{ 
		\begin{array}{l l}
		(f(q,\sigma),N) &\text{ if } l=N\wedge \tilde{\sigma} \!\notin\! \Sigma_{e,F} \\ 
		(f(q,\sigma),F) &\text{ otherwise }
		\end{array} 
		\right..  
	\] 
\end{itemize}

The above constructed augmented system $\tilde{G}$ has the following properties:
\begin{itemize}
    \item 
    First, the augmented system $\tilde{G}$ generates extended strings. 
    Essentially, it still tracks the original dynamic of the system by putting both the output realization and the current state information together with the internal event. 
    Therefore, we have 
    \[
    \mathcal{L}(\tilde{G})=\mathcal{L}_e(G)
    \text{ and }
    \mathcal{L}^\omega(\tilde{G})=\mathcal{L}_e^\omega(G).
    \]
    \item 
    Second, each augmented state $(q,l) \in Q \times \{F,N\}=\tilde{Q}$ has two components. 
   The first component $q$ is the actual state in the original system $G$ and the second component $l \in \{N,F\}$ is a label denoting whether fault events have occurred. 
   By the construction, 
  the label will change from $N$ to $F$ only when an extended fault event occurs and 
  once the label becomes $F$, it will be $F$ forever. 
   We denote by $\tilde{Q}_N=\{(q,l)\in \tilde{Q}: l = N\}$ the set of normal augmented states and the set of faulty states  $\tilde{Q}_F$ is defined analogously.  
\end{itemize}

\begin{figure} 
	\centering
	\centering
	\includegraphics[width=3cm]{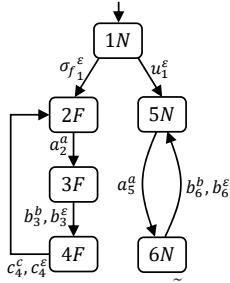}
	\caption{\label{fig:G1-1}Augmented system $\tilde{G}_1$ for system $G_1$.}
\end{figure}
\begin{myexm}
Still, we consider system $G_1$ shown in Figure~\ref{fig:G-1} with the same setting in Example~\ref{exm:3}. 
Its augmented system $\tilde{G}_1$   is depicted in Figure~\ref{fig:G1-1}, where $(1,\sigma_{\!f}, \varepsilon)$ is the extended fault event and after occurrence of which all states is augmented with label $F$, e.g., states $2F$ and $3F$. 
Furthermore, the transitions of the augmented system $\tilde{G}_1$ are defined according to the actual transitions and the underlying observations of original system $G_1$. For example, because $f(3,b)=4$ and $\mathcal{O}(3,b)=\{b,\varepsilon\}$, we have two new transitions in $\tilde{G}_1$: $\tilde{f}(3F, b_3^b)=4F$ and $\tilde{f}(3F, b_3^\varepsilon)=4F$.
\end{myexm}


\subsection{Verification Structure}	
 
Let $r \in 2^{\tilde{Q}}$ be a set of states representing the current state estimation of the augmented system. 
By observing a new output symbol $\delta\in \Delta$, the estimate is updated by the following observable reach
\[
\textsf{Next}(r,\delta)=
\left\{  \tilde{q}' \in \tilde{Q}
: \!\!\!
\begin{array}{cc}
\exists \tilde{q}\in r, \tilde{\sigma}\in \Sigma_e \text{ s.t } \\
\Theta_{\Delta}(\tilde{\sigma})\!=\!\delta  \wedge \tilde{q}'\!=\!\tilde{f}(\tilde{q},\tilde{\sigma})
\end{array}
\right\}.
\]
Also, the system can execute silently without generating an output symbol. 
This is captured by the unobservable reach  defined by: 
\[
\textsf{UR}(r)=
\left\{
 \tilde{q}' \in \tilde{Q}
: \!\!\!
\begin{array}{cc}
\exists \tilde{q}\in r, s\in \Sigma_e^* \text{ s.t } \\
\Theta_{\Delta}(s)\!=\!\varepsilon  \wedge \tilde{q}'\!=\!\tilde{f}(\tilde{q},s)
\end{array}
\right\}.
\]

Now, based on the augmented system $\tilde{G}= (\tilde{Q},\tilde{q}_0,\tilde{\Sigma},\tilde{f})$, 
we construct the  \emph{verification structure} as follows:
\begin{equation}
	V=(Q_V,q_{V}^0,\Sigma_V,f_V)
\end{equation}
where
\begin{itemize}
	\item 
	$Q_V \subseteq \tilde{Q} \times 2^{\tilde{Q}}$ is the  set of states;
	\item 
	$q_{V}^0=(\tilde{q}_0,\textsf{UR}(\{\tilde{q}_0\}))$ is the initial state;
	\item 
	$\Sigma_V=\Sigma_e$ is the event set, which is just the set of extended events;
	\item 
	$f_V:Q_V \times \Sigma_V \rightarrow Q_V$ is the transition function defined by: 
	for any $(\tilde{q},r) \in \tilde{Q} \times 2^{\tilde{Q}}$ and $\tilde{\sigma}\in \Sigma_V$, we have
	\[
	f_V( (\tilde{q},r), \tilde{\sigma} )=
	\left\{\!\! 
	\begin{array}{l l}
		(\tilde{q}',r) &\text{if }   \Theta_{\Delta}(\tilde{\sigma})=\varepsilon \\ 
		(\tilde{q}',r')  &\text{if } \Theta_{\Delta}(\tilde{\sigma})\in \Delta
	\end{array} 
	\right.\]  
	where $\tilde{q}'=\tilde{f}(\tilde{q},\tilde{\sigma})$
	and $r'=\textsf{UR}( \textsf{Next}(r,  \Theta_{\Delta}(\tilde{\sigma})  ) )$.
\end{itemize}

Intuitively, each state $(\tilde{q},r)$ in the verification structure $V$ has two components. 
The first component $\tilde{q} \in \tilde{Q}$ tracks the current (augmented) state in $\tilde{G}$. Therefore, the transition of the first part is consistent with $\tilde{f}$. 
As a result, we also have 
\[
\mathcal{L}(V)=\mathcal{L}(\tilde{G})=\mathcal{L}_e(G)
\]
and the same for the generated infinite language. 
On the other hand, the second component  $r \in 2^{\tilde{Q}}$ captures the current state estimation of system $\tilde{G}$. 
Therefore, this component is updated only when a new output symbol is generated, i.e.,  
$\Theta_{\Delta}(\tilde{\sigma})\in \Delta$. 
Furthermore, if it is updated, then it first updates the estimate by the observable reach to compute the set of state that can be reached immediately following the output. 
Also, we need to include all states that can be reached silently by using the unobservable reach. 
Essentially, we can image $V$ as the synchronization of the augmented system $\tilde{G}$ and its current-state estimate under the non-deterministic observation setting.

For any state $(\tilde{q},r) \in Q_V$ in $V$, we say the state is
\begin{itemize}
	\item 
	\emph{certain} 
	if $\tilde{q}\in \tilde{Q}_F$ and $r\subseteq \tilde{Q}_F$;
	\item 
	\emph{uncertain} 
	if $\tilde{q}\in \tilde{Q}_F$ and $r \cap \tilde{Q}_N\neq \emptyset$.
\end{itemize}
We denote by $Q_V^{cer}$ and $Q_V^{unc}$ the set of certain states and uncertain states in $V$, respectively. 
Then for any string $s \in \mathcal{L}(V)$, 
based on the definition of uncertain and certain states, we have following properties: 
\begin{itemize}
	\item
	For any faulty  extended string  $s \in \mathcal{L}(V)=\mathcal{L}_e(G)$, 
there exists a normal extended string $s' \in \mathcal{L}_e(G)$ such that $\Theta_{\Delta}(s)=\Theta_{\Delta}(s')$ if and only if $f_V(q_V^0,s) \in Q_V^{unc}$.
	\item 
	For any faulty  extended  string  $s \in \mathcal{L}(V)=\mathcal{L}_e(G)$,
	if $f_V(q_V^0,s) \in Q_V^{cer}$, 
	then for any of its continuation $s t \in \mathcal{L}(V)$, we have $f_V(q_V^0, s t) \in Q_V^{cer}$.
\end{itemize}

\subsection{Checking OF-Diagnosability}
Now we investigate the verification of OF-diagnosability. 
According to Definition~\ref{def:OF}, a system is not OF-diagnosable if and only if 
there exists an infinite extended faulty string satisfying the $\Delta'$-fairness assumption but all states in $V$ reached along the string are uncertain. 
Then, since the systems is finite, only loops can generate infinite strings. This leads to the definition of \emph{fairly uncertain loop} (FU-loop). 

Formally, given the verification structure $V$, we define a \emph{run} in $V$ as a finite sequence 
\[
\pi=q_V^1 \xrightarrow{\sigma_V^1} q_V^2 \xrightarrow{\sigma_V^2} \cdots \xrightarrow{\sigma_V^{n-1}} q_V^n 
\]
 where $q_V^1,\cdots,q_V^n \in Q_V, \sigma_V^1,\cdots,\sigma_V^n \in \Sigma_V$ and $q_V^{i+1} = f_V(q_V^i,\sigma_V^i), i=1,\dots,n-1$. 
A run of the above form is called a \emph{loop} if $q_V^1=q_V^n$. 

Then given a  loop $\pi = q_V^1 \xrightarrow{\sigma_V^1} q_V^2 \cdots  \xrightarrow{\sigma_V^{n-1}} q_V^n$ in the verification structure, 
where $\sigma_V^i=( q^i,\sigma^i,\delta^i )$, 
we say $\pi$ is 
\begin{itemize}
    \item 
    \emph{fair} if for each transition that occurs in the loop, any of its fair output symbols must occur in the loop associating with the same transition, i.e.,  
		\begin{align}
		&(\forall i \in \{1, \cdots, n\})
		(\forall \delta\in \mathcal{O}(q^i,\sigma^i)\cap \Delta')\nonumber\\
		&
		(\exists j \in \{1, \cdots, n\})
		[(q^j,\sigma^j)\!=\!(q^i,\sigma^i) \wedge \delta^j\!=\!\delta]
		\nonumber
		\end{align}
	\item 
	\emph{uncertainty} if all states in the loop are uncertain, i.e., 
		\[
		\forall i \in \{1, \cdots, n\}: q_V^i \in Q_V^{unc}
		\]
	\item 
	\emph{reachable} if  there exists a finite string $s \in \mathcal{L}(V)$ such that $f_V(q_V^0,s)=q_V^1$. 
\end{itemize}
Clearly, by properties of the verification structure, we know that 
an uncertain loop will only be reached by faulty strings. 
Furthermore, if some state in a loop is uncertain, then all states in it are uncertain. 

In the following two lemmas, we formally establish the relationship between strings satisfying the $\Delta'$-fairness assumption and fair loops, according to the structure characteristics of $\Delta'$-fairness assumption. Due to space constraint, their proofs are omitted here and are available in \cite{sup-material}.

First, we show that, for any reachable fair loop, it can generate an infinite extended string string satisfying the $\Delta'$-fairness assumption. 
\begin{mylem} \label{lem:satisfy OF}
Given a  reachable fair loop $\pi = q_V^1 \xrightarrow{\sigma_V^1} q_V^2 \cdots  \xrightarrow{\sigma_V^{n-1}} q_V^n$ in verification system $V$, 
there exists an infinite extended string  in $V$ in the form of
\[
s=t\left(  \sigma_V^1\sigma_V^2 \cdots \sigma_V^n  \right)^\omega\in \mathcal{L}^\omega(V)=\mathcal{L}^\omega_e(G)
\]
such that $s\models \varphi_{fair}$.
\end{mylem}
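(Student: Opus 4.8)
The plan is to construct the infinite extended string $s = t(\sigma_V^1 \sigma_V^2 \cdots \sigma_V^n)^\omega$ explicitly from the loop data and then verify directly that it satisfies the assume-guarantee formula $\varphi_{fair}$ from Equation~\eqref{eq:LTL-form}. First I would observe that since $\pi$ is a \emph{reachable} loop, there is a finite string $t \in \mathcal{L}(V)$ with $f_V(q_V^0, t) = q_V^1$, and since $q_V^1 = q_V^n$ the concatenation $t(\sigma_V^1 \cdots \sigma_V^n)^k$ is in $\mathcal{L}(V)$ for every $k$, so the infinite string $s$ is well defined and lies in $\mathcal{L}^\omega(V) = \mathcal{L}^\omega_e(G)$ (using the established identity between the languages of $V$, $\tilde G$, and the extended language of $G$). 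The internal event string of $s$ is $\Theta_\Sigma(t)$ followed by $(\sigma^1 \cdots \sigma^n)^\omega$, so $f(q_0, \cdot)$ is defined throughout and $s$ is genuinely a generated infinite extended string.

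Next I would analyze $\textsf{Inf}(s)$ as an $\omega$-word over atomic propositions $\mathcal{AP} = \Sigma_e$: the extended events occurring infinitely often in $s$ are exactly those in the loop body, i.e. $\{\sigma_V^1, \dots, \sigma_V^n\} = \{(q^i, \sigma^i, \delta^i) : 1 \le i \le n\}$, since the prefix $t$ contributes only finitely. With this in hand, checking $s \models \varphi_{fair}$ reduces to a conjunct-by-conjunct verification: fix a transition $q \xrightarrow{\sigma}$ and suppose $s \models \Box\lozenge \sigma_q$, i.e. the disjunction $\bigvee_{\delta \in \mathcal{O}(q,\sigma)} \sigma_q^\delta$ holds infinitely often; this forces some extended event $(q,\sigma,\delta^*)$ to appear in the loop body, hence the pair $(q,\sigma)$ equals some $(q^i, \sigma^i)$. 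Then the \emph{fairness} property of $\pi$ applies: for every $\delta \in \mathcal{O}(q,\sigma) \cap \Delta' = \mathcal{O}(q^i, \sigma^i) \cap \Delta'$ there is an index $j$ with $(q^j, \sigma^j) = (q^i, \sigma^i)$ and $\delta^j = \delta$, so the extended event $(q,\sigma,\delta) = \sigma_q^\delta$ appears in the loop body and therefore occurs infinitely often in $s$, giving $s \models \Box\lozenge \sigma_q^\delta$. Since this holds for every $\delta \in \Delta' \cap \mathcal{O}(q,\sigma)$, the consequent $\bigwedge_{\delta \in \Delta' \cap \mathcal{O}(q,\sigma)} \Box\lozenge \sigma_q^\delta$ holds, and the conjunct is satisfied; the conjuncts for pairs $(q,\sigma)$ whose antecedent fails are vacuously satisfied.

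The main obstacle — though a mild one given the assume-guarantee shape of $\varphi_{fair}$ — is the bookkeeping translating between the LTL semantics over $(2^{\mathcal{AP}})^\omega$ and the combinatorial "appears in the loop body" statements, in particular being careful that the word associated with $s$ assigns to position $k$ the singleton set containing the $k$-th extended event, so that $\Box\lozenge p$ is equivalent to "$p \in \textsf{Inf}$". Once that dictionary is fixed the argument is a direct unwinding, and no nontrivial construction beyond $s = t(\sigma_V^1 \cdots \sigma_V^n)^\omega$ is needed; the crucial hypothesis consumed is precisely the definition of a \emph{fair} loop, which was designed so that every infinitely-often-fired transition in the loop realizes all of its fair outputs within the same loop.
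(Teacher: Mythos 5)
Your proof is correct and is essentially the paper's argument presented directly rather than by contradiction: both hinge on the observation that the extended events occurring infinitely often in $s=t(\sigma_V^1\cdots\sigma_V^n)^\omega$ are exactly those in the loop body, so the definition of a fair loop immediately discharges each conjunct of $\varphi_{fair}$ whose antecedent $\Box\lozenge\sigma_q$ holds. The only cosmetic difference is that the paper assumes $s\not\models\varphi_{fair}$ and derives that $\pi$ is unfair, while you verify the assume-guarantee conjuncts one by one; the content is identical.
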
 
\begin{proof}
By contradiction, we assume $s=t\left(  \sigma_V^1\sigma_V^2 \cdots \sigma_V^n  \right)^\omega$ does not satisfy $\varphi_{fair}$, that is, there exists $\sigma \in \Sigma, q \in Q$ such that $s$ satisfies LTL formula $\Box \diamondsuit \sigma_q$, while there exists $\delta \in \Delta' \cap \mathcal{O}(q,\sigma)$ and $s$ does not satisfy formula $\Box \diamondsuit \sigma^\delta_q$. 
By the first formula $\Box \diamondsuit \sigma_q$, we have there is $\delta' \in \mathcal{O}(q,\sigma), \delta' \neq \delta$ such that $\sigma^{\delta'}_q $ appears in $s$ for infinite times, that is, $\sigma^{\delta'}_q $ is concluded in $ \sigma_V^1\sigma_V^2 \cdots \sigma_V^n$, and since the formula $\Box \diamondsuit \sigma^\delta_q$ is false, $\sigma^{\delta}_q$ cannot occur for infinite times, that is, $\sigma^{\delta}_q$ is not contained by $ \sigma_V^1\sigma_V^2 \cdots \sigma_V^n$. 
Thus, the loop $\pi$ is not fair and the hypothesis is contradicted.	
\end{proof}
 
On the other hand, 
given an infinite extended string  satisfying $\varphi_{fair}$, 
it will also induce a fair loop. 

\begin{mylem} \label{lem:loop existence}
For any infinite extended string $s \in \mathcal{L}^\omega(V)=\mathcal{L}^\omega_e(G)$ satisfying  $\varphi_{fair}$, 
we can find a reachable fair loop $\pi = q_V^1 \xrightarrow{\sigma_V^1} q_V^2 \xrightarrow{\sigma_V^2} \cdots \xrightarrow{\sigma_V^{n-1}} q_V^n$ in which events in the loop are the same as those event appearing infinite number of times in $s$, i.e., $\{ \sigma_V^1,\dots,\sigma_V^n \} = \textsf{Inf}(s)$.
\end{mylem}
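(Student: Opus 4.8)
The plan is to extract a fair loop from an accepting run of $s$ in $V$ by tracking the finite set $\textsf{Inf}(s)$ of extended events occurring infinitely often and the associated states visited infinitely often. First I would fix the unique run $\rho = q_V^0 \xrightarrow{\tilde\sigma_1} q_V^{(1)} \xrightarrow{\tilde\sigma_2} q_V^{(2)} \cdots$ of $s$ in $V$ (unique since $V$ is deterministic and $s \in \mathcal{L}^\omega(V)$). Because $Q_V$ is finite, the set $S_\infty$ of states visited infinitely often along $\rho$ is nonempty, and there is an index $m$ after which $\rho$ stays inside $S_\infty$ and uses only events from $\textsf{Inf}(s)$; choose $q_V^1 \in S_\infty$ to be some state visited infinitely often past $m$. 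I would then take the segment of $\rho$ strictly between two successive visits to $q_V^1$ that occur late enough that \emph{every} event of $\textsf{Inf}(s)$ has already been seen once in the interval, and more strongly, such that for every transition $q\xrightarrow{\sigma}$ with $\sigma_q$ occurring infinitely often and every $\delta\in\Delta'\cap\mathcal{O}(q,\sigma)$, the extended event $\sigma^\delta_q$ occurs in that interval. Such an interval exists: each of the finitely many extended events in $\textsf{Inf}(s)$ recurs infinitely often, so one can pick a window between two $q_V^1$-visits containing at least one occurrence of each. This window is a loop $\pi$ at $q_V^1$, and its prefix to reach $q_V^1$ from $q_V^0$ shows $\pi$ is reachable, so the \emph{reachable} condition holds.

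Next I would argue that the event multiset of $\pi$ is exactly $\textsf{Inf}(s)$. By construction every event of $\textsf{Inf}(s)$ appears in $\pi$; conversely, if the window is chosen to lie entirely past the index $m$ beyond which only events of $\textsf{Inf}(s)$ are used, then no event outside $\textsf{Inf}(s)$ appears in $\pi$. Hence $\{\sigma_V^1,\dots,\sigma_V^n\} = \textsf{Inf}(s)$ as required.

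It remains to check that $\pi$ is \emph{fair}. Take any transition index $i$ in $\pi$ and any $\delta \in \mathcal{O}(q^i,\sigma^i)\cap\Delta'$. Since $\sigma^i_{q^i}$ occurs in $\pi$, the proposition $\sigma_{q^i}$ (which is $\bigvee_{\delta'}\sigma^{\delta'}_{q^i}$) is satisfied infinitely often along $s$: indeed $\sigma^i_{q^i}\in\textsf{Inf}(s)$ by the previous paragraph, so $s \models \Box\lozenge\, \sigma_{q^i}$. The hypothesis $s\models\varphi_{fair}$ then forces $s \models \Box\lozenge\, \sigma^\delta_{q^i}$, i.e.\ $\sigma^\delta_{q^i} \in \textsf{Inf}(s)$, hence $\sigma^\delta_{q^i}$ appears in $\pi$ by construction. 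This gives an index $j$ with $(q^j,\sigma^j)=(q^i,\sigma^i)$ and $\delta^j=\delta$, which is precisely the fairness condition.

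The main obstacle — and the only genuinely delicate point — is the simultaneous choice of the loop window: it must (i) begin and end at the \emph{same} state $q_V^1$, (ii) contain an occurrence of every extended event in $\textsf{Inf}(s)$, and (iii) avoid all events outside $\textsf{Inf}(s)$. Resolving this needs a small pigeonhole/interleaving argument: fix a suffix of $s$ past which only $\textsf{Inf}(s)$-events occur and in which $q_V^1$ recurs infinitely; list the finitely many events of $\textsf{Inf}(s)$, and for each pick a position where it occurs in that suffix; then take any two visits to $q_V^1$ bracketing all those chosen positions. All three conditions hold for the resulting window. Everything else is a direct unwinding of the LTL semantics of $\varphi_{fair}$ together with the deterministic-run property $\mathcal{L}^\omega(V)=\mathcal{L}^\omega_e(G)$ established for the verification structure.
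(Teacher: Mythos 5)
Your proposal is correct and follows essentially the same route as the paper's proof: extract from the run of $s$ a reachable loop whose event set is exactly $\textsf{Inf}(s)$, then use $s\models\varphi_{fair}$ to conclude that every fair output of every transition appearing in the loop also lies in $\textsf{Inf}(s)$ and hence in the loop. The only difference is presentational: you spell out the pigeonhole construction of the loop window (which the paper simply asserts from finiteness of $V$) and argue fairness directly rather than by contradiction.
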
 
\begin{proof}
	Suppose there is an infinite string $s \in \mathcal{L}(V)$ and $s$ satisfies $\varphi_{fair}$. 
	Because system $V$ is finite, we can obtain a reachable loop with the structure $\pi = q_V^1 \xrightarrow{\sigma_V^1} q_V^2 \xrightarrow{\sigma_V^2} \cdots \xrightarrow{\sigma_V^{n-1}} q_V^n$ such that the set of events in loop $\pi$ is equal to $\textsf{Inf}(s)$, i.e., $\{\sigma_V\in \Sigma_V: \exists i\in \{1,\cdots,n\},\sigma_V=\sigma_V^i\} = \textsf{Inf}(s)$.
	We assume the loop $\pi$ is not fair, by contradiction. 
	That is, there exists $ i \in \{1,\cdots, n\}$ and a fair output $\delta$ in the set $\mathcal{O}(q^i, \sigma^i) \cap \Delta'$ such that for all $j \in \{1,\cdots, n\}$, we have either $(q^j,\sigma^j) \neq (q^i,\sigma^i)$ or $\delta^j \neq \delta$.
	Let $q=q^i=q^j, \sigma=\sigma^i=\sigma^j$ and $\delta'=\delta^i$. As a result, we know that $\sigma^{\delta'}_q $ is included by $ \textsf{Inf}(s)$ but $\sigma^{\delta}_q $ is not. The formula $\Box \lozenge \sigma_q$ is true, while the formula $\Box \lozenge \sigma^{\delta}_q$ is false. Then, the hypothesis is contradicted.	
\end{proof}

Based on  Lemmas \ref{lem:satisfy OF} and \ref{lem:loop existence}, 
we obtain the following main result, which shows that, 
to check OF-diagnosability, it suffices to check the existence of a reachable fair and uncertain loop (FU-loop) in the verification structure $V$. 

\begin{mythm}\label{thm:verify OF-diag}
	A system $G$ is not OF-diagnosable w.r.t. fault events $\Sigma_F$, output function $\mathcal{O}$ and fair outputs $\Delta' \subseteq \Delta$, if and only if, there exists a reachable FU-loop in the verification structure $V$. 
\end{mythm}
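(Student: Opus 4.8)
The plan is to unfold Definition~\ref{def:OF-diagnosability}, translate it into a reachability property of the verification structure $V$, and then match that property to the existence of an FU-loop via Lemmas~\ref{lem:satisfy OF} and~\ref{lem:loop existence}. First I would rewrite ``$G$ is not OF-diagnosable'': it means there is an infinite faulty extended string $s \in \mathcal{L}^{\varphi}_{e,F}(G)$ such that $\textsf{fair-diag}$ fails at every prefix $t \in \textsf{Pre}(s)$, i.e., for each such $t$ there is a \emph{normal} $w \in \mathcal{L}_e(G)$ with $\Theta_{\Delta}(w) = \Theta_{\Delta}(t)$. For a normal prefix $t$ this is vacuous (take $w = t$), so the content is that every \emph{faulty} prefix $t$ of $s$ has a normal look-alike, which by the first property of the verification structure is exactly $f_V(q_V^0, t) \in Q_V^{unc}$. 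Since the faulty prefixes of $s$ are precisely those extending its shortest faulty prefix, non-OF-diagnosability is equivalent to: there exists $s \in \mathcal{L}^{\varphi}_{e,F}(G)$ whose run in $V$ stays inside $Q_V^{unc}$ once a fault has occurred.

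For the ``only if'' direction, I would take such an $s$ and apply Lemma~\ref{lem:loop existence} to obtain a reachable fair loop $\pi$ with event set $\textsf{Inf}(s)$. Because every event that occurs infinitely often in $s$ recurs at arbitrarily late positions, $\pi$ may be taken to be a cycle appearing in the run of $s$ strictly after the first fault event; by the previous paragraph, all states on that portion of the run lie in $Q_V^{unc}$, so $\pi$ is uncertain. Hence $\pi$ is a reachable fair uncertain loop, i.e., an FU-loop.

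For the ``if'' direction, let $\pi = q_V^1 \xrightarrow{\sigma_V^1} \cdots \xrightarrow{\sigma_V^{n-1}} q_V^n$ be a reachable FU-loop reached by a string $t$, and use Lemma~\ref{lem:satisfy OF} to get $s = t(\sigma_V^1 \cdots \sigma_V^n)^\omega \in \mathcal{L}^{\omega}_e(G)$ with $s \models \varphi_{fair}$. Since $\pi$ is uncertain, its states have a faulty first component, so $\pi$ is reached only by faulty strings; thus $t$, and hence $s$, is faulty, giving $s \in \mathcal{L}^{\varphi}_{e,F}(G)$. It remains to verify that $\textsf{fair-diag}$ fails at every prefix $t'$ of $s$: if $t'$ is normal, take $w = t'$; if $t'$ is faulty and a prefix of $t$, then $f_V(q_V^0, t') \in Q_V^{unc}$, for otherwise the ``once certain, always certain'' property would force $f_V(q_V^0, t) = q_V^1$ to be certain, contradicting the uncertainty of $\pi$; and if $t'$ extends $t$ into the loop, then $f_V(q_V^0, t')$ is one of the loop states $q_V^i \in Q_V^{unc}$. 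In every faulty case $f_V(q_V^0, t') \in Q_V^{unc}$, so by the first property of $V$ there is a normal look-alike and $\textsf{fair-diag}$ fails; hence $s$ witnesses non-OF-diagnosability.

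The step I expect to be the main obstacle is the uncertainty claim in the ``only if'' direction: Lemma~\ref{lem:loop existence} only delivers a reachable fair loop with the prescribed event set, and one has to argue that this loop can be located entirely within the portion of the run of $s$ that follows the first fault event — which is exactly the portion the reformulated hypothesis forces into $Q_V^{unc}$. This rests on the observation that every event in $\textsf{Inf}(s)$ recurs after the finitely many fault occurrences; the rest of the argument (including the use of ``once certain, always certain'' for prefixes of the transient $t$ in the ``if'' direction, and the harmless off-by-one between the $n{-}1$ transitions of $\pi$ and the $n$-letter word of Lemma~\ref{lem:satisfy OF}) is routine.
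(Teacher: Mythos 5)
Your proof is correct and follows essentially the same route as the paper: both directions hinge on the correspondence between uncertain states of $V$ and the existence of normal look-alikes, with Lemma~\ref{lem:satisfy OF} turning a reachable FU-loop into a $\varphi_{fair}$-satisfying faulty string and Lemma~\ref{lem:loop existence} extracting a fair loop from a fair string and then arguing it must lie in the uncertain region. You are in fact somewhat more explicit than the paper on the two delicate points (locating the loop after the first fault occurrence, and showing faultiness of the transient prefix via the dichotomy certain/uncertain for faulty states), but the argument is the same.
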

\begin{proof}
($\Leftarrow$) Assume system $G$ is OF-diagnosable. 
That is, for any infinite faulty extended string $s \in \mathcal{L}^\varphi_{e,F}(G)$, it has a finite prefix $t$ such that each extended string $\omega \in \mathcal{L}_e(G)$, which has the same output with $t$, is faulty, i.e., $\Sigma_{e,F} \in \omega$. 

By contradiction, we claim there exists a reachable  FU-loop $\pi$ in $V$. 
Suppose $\pi=q_V^1 \xrightarrow{\sigma_V^1} q_V^2 \xrightarrow{\sigma_V^2} \cdots \xrightarrow{\sigma_V^{n-1}} q_V^n$. 
By Lemma 1, there is an infinite string $s=t\left(  \sigma_V^1\sigma_V^2 \cdots \sigma_V^n  \right)^\omega$ in $V$ satisfying $\varphi_{fair}$, where $t \in \textsf{Pre}(s)$ and $f_V(q_V^0,s)=q^1_V$. 
By properties of verification structure, we know that for any prefix $t' \in \textsf{Pre}(s)$, $f(q^0_V,t') \notin Q^{cer}_V$. 
And since all states in loop $\pi$ are uncertain, for any prefix $t' \in \textsf{Pre}(s)$, there exists a normal extended string $\omega \in \mathcal{L}_e(G)$ having the same output with $t'$. The system is not OF-diagnosable, which contradicts the hypothesis.

($\Rightarrow$) Suppose system $G$ is not OF-diagnosable. 
By Definition 3, there is an infinite faulty extended string $s \in \mathcal{L}^\varphi_{e,F}(G)$ such that for any prefix $t \in \textsf{Pre}(s)$, a normal extended string $\omega \in \mathcal{L}_e(G)$ has the same output with $t$. 
Because of this normal extended string $\omega$, by  properties of verification structure, any prefix $t$ of the infinite faulty extended string $s$ cannot reach certain states, i.e., $f_V(q^0_V, t) \notin Q_V^{cer}$. 
Because extended string $s$ is fair, i.e., $s \models \varphi_{fair}$, by Lemma 2, there exists a fair loop $\pi = q_V^1 \xrightarrow{\sigma_V^1} q_V^2 \xrightarrow{\sigma_V^2} \cdots \xrightarrow{\sigma_V^{n-1}} q_V^n$, where $\{ \sigma_V^1,\dots,\sigma_V^n \} = \textsf{Inf}(s)$. 
For any faulty prefix $t' \in \textsf{Pre}(s)$, there is a string $t'' \in \Sigma_e^*$ such that $t' t''$ is a prefix of $s$ and reaches the state $q_V^1$, i.e., $f_V(q_V^0,t' t'')=q_V^1$. By properties of verification structure, all states in loop $\pi$ are uncertain, that is, $\pi$ is a reachable FU-loop. The proof is complete.
\end{proof}
 
The condition in Theorem \ref{thm:verify OF-diag} can be checked as follows. 
First, in verification structure, we find all strongly connected components (SCCs) that are reachable by fault events. 
Clearly, each of above SCCs only consists of either certain states or uncertain states and we only need to consider those uncertain SCCs.
Then we need to check if any of the uncertain SCCs contains a fair loop. 
To this end, for each extended event $(q,\sigma,\delta)$, we check if it is fair in the sense that  any $\delta\in \Delta' \cap \mathcal{O}(q,\sigma)$ also appears in the same SCC. 
If not, we need to remove such an extended event, and then recompute the SCCs and repeat the above removal procedure. 
When no such ``unfair" extended event can be removed, the SCC remained (if exists) will contain a fair-loop; otherwise, no fair-lopp can be found.  
The above procedure is in polynomial-time in the size of the verification structure since computing all SCCs can be done in linear time and the above procedure repeats at most $|Q|^2|\Sigma||\Delta|$ times. 
However, the size of the verification structure is exponential in the size of the original system. 
Therefore, the overall complexity for checking OF-diagnosability is exponential.

\begin{myexm}
Still,  let us consider system $G_1$ shown  Figure~\ref{fig:G-1}
with $\Sigma_F=\{\sigma_{\!f}\}$ and suppose that  the fair outputs are   $\Delta'=\{b,c\}$. 
As we have discussed in Example \ref{exm:3}, this system is  OF-diagnosability. Here, we analyze this more formally using Theorem \ref{thm:verify OF-diag}. 
Based on the augmented system $\tilde{G}_1$, we construct the verification structure $V_1$; 
part of it  is shown in Figure~\ref{fig:G-1-3}, where we focus on the only reachable uncertain loop, as highlighted with red lines in the Figure~\ref{fig:G-1-3}, and omit other parts without loss of generality for the purpose of verification. 
To form a loop, we have to fire extended event $c_4^\epsilon$ infinite number of times  at state $4F$. 
However,  the fair output of transition $4\xrightarrow{c}$ is $c\in \Delta'$, i.e., $\mathcal{O}(4,c)\cap \Delta'=\{c\}$ 
and extended  $c_4^c$ is not included. 
Therefore, we conclude the $V_1$ cannot form a FU-loop, which means that
system $G_1$ is OF-diagnosable according to Theorem \ref{thm:verify OF-diag}. 
\end{myexm}
\begin{figure}[htbp]
	\centering
	\centering
	\includegraphics[width=6.5cm]{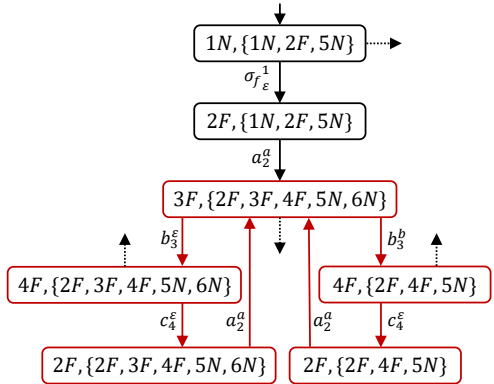}
	\caption{\label{fig:G-1-3}Part of verification structure $V_1$ corresponding to system $G_1$. The uncertain loop is highlighted by red lines.}
\end{figure}

In the previous running example, the non-deterministic observation is  either to see the occurrence of the internal event or to lose the observation. This actually corresponds to the special case of intermittent loss of observations. 
As we mentioned, our framework captures  the general case of state-dependency and allows the
output symbols to be different from the event set.  
In the following example, we consider such a general scenario. 

\begin{myexm}
Let us consider  system $G_2$ shown in Figure~\ref{fig:G-2} with $\Sigma=\{a,b,c,\sigma_{\!f},u\}$, $\Delta=\{o_1,o_2,o_3\}$ and $\Sigma_F=\{\sigma_{\!f}\}$. 
We assume that the fair outputs are $\Delta'=\{o_2,o_3\}$. 
For example, for transition $4\xrightarrow{c}$, we may observe any output $\delta \in \mathcal{O}(4,c)=\{o_2,o_3,\varepsilon\}$ non-deterministically. 
The  augmented system $\tilde{G}_2$ is  depicted in Figure~\ref{fig:G-2-2}. 
For this system, the  $\Delta'$-fairness assumption is given by 
\begin{align}
    \varphi_{fair}=
    &(\bigwedge_{i=4,7} (\Box \lozenge  c_i  \to \bigwedge_{\delta \in \{o_2,o_3\}} \Box \lozenge c^{\delta}_i))\nonumber \\
    &\wedge (\Box \lozenge  b_3  \to \bigwedge_{\delta \in \{o_2,o_3\}} \Box \lozenge b^{\delta}_3) \wedge \left(\Box\lozenge  b_6  \to  \Box \lozenge b^{o_2}_6\right). \nonumber \vspace{-12pt}
\end{align} 
Now, let us verify OF-diagnosability using  Theorem \ref{thm:verify OF-diag}. 
To this end, we construct the corresponding verification system $V_2$, part of which is shown in  Figure~\ref{fig:G-2-3}. 
However, there is fair but uncertainty loop reachable, which is highlighted in red color. 
Specifically, we consider the following loop
\begin{align}
	&(3F,\{2F,3F,4F,7N\})
	\xrightarrow{b_3^\varepsilon}
	(4F,\{2F,3F,4F,7N\})\nonumber\\
	\xrightarrow{c_4^{o_2}}
	&(2F,\{2F,4F,8N\})  
	\xrightarrow{a_2^{o_1}} 
	(3F,\{2F,3F,4F,7N\})\nonumber\\
	\xrightarrow{b_3^{o_2}}
	&(4F,\{2F,4F,8N\})  
	\xrightarrow{c_4^{\varepsilon}} 
	(2F,\{2F,4F,8N\}) \nonumber\\ 
	\xrightarrow{a_2^{o_1}} 
	&(3F,\{2F,3F,4F,7N\})
	\xrightarrow{b_3^\varepsilon}
	(4F,\{2F,3F,4F,7N\})\nonumber\\
	\xrightarrow{c_4^{o_3}}
	&(2F,\{2F,4F,8N\})  
	\xrightarrow{a_2^{o_1}} 
	(3F,\{2F,3F,4F,7N\})\nonumber\\
	\xrightarrow{b_3^{o_3}}
	&(4F,\{2F,4F,8N\})  
	\xrightarrow{c_4^{\varepsilon}} 
	(2F,\{2F,4F,8N\}) \nonumber\\ 
	\xrightarrow{a_2^{o_1}} 
	&(3F,\{2F,3F,4F,7N\}).\nonumber 
\end{align}
This loop is fair because for transition $3\xrightarrow{b}$ all fair outputs
in $\mathcal{O}(3,b) \cap \Delta'=\{o_2,o_3\}$ occur in the loop, and for transition $4\xrightarrow{c}$ all fair outputs
in $\mathcal{O}(4,c) \cap \Delta'=\{o_2,o_3\}$ occur in the loop. 
Furthermore, all states in the loop are uncertain. 
Thus, system $G_2$ is not OF-diagnosable.
\end{myexm}

\begin{figure}[htbp] 
	\centering
	\subfigure[System $G_2$.]
		{\label{fig:G-2}
		\centering
		\includegraphics[width=0.4\linewidth]{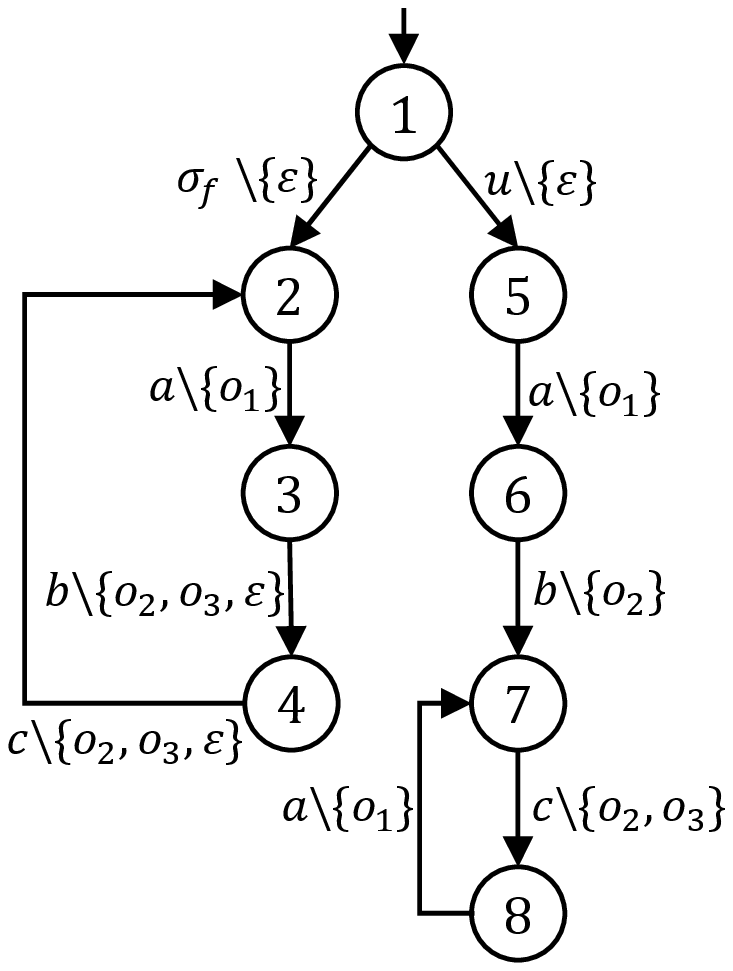}
		}
	\quad
	\subfigure[Augmented system $\tilde{G}_2$.]
	{\label{fig:G-2-2}
	\centering
	\includegraphics[width=0.4\linewidth]{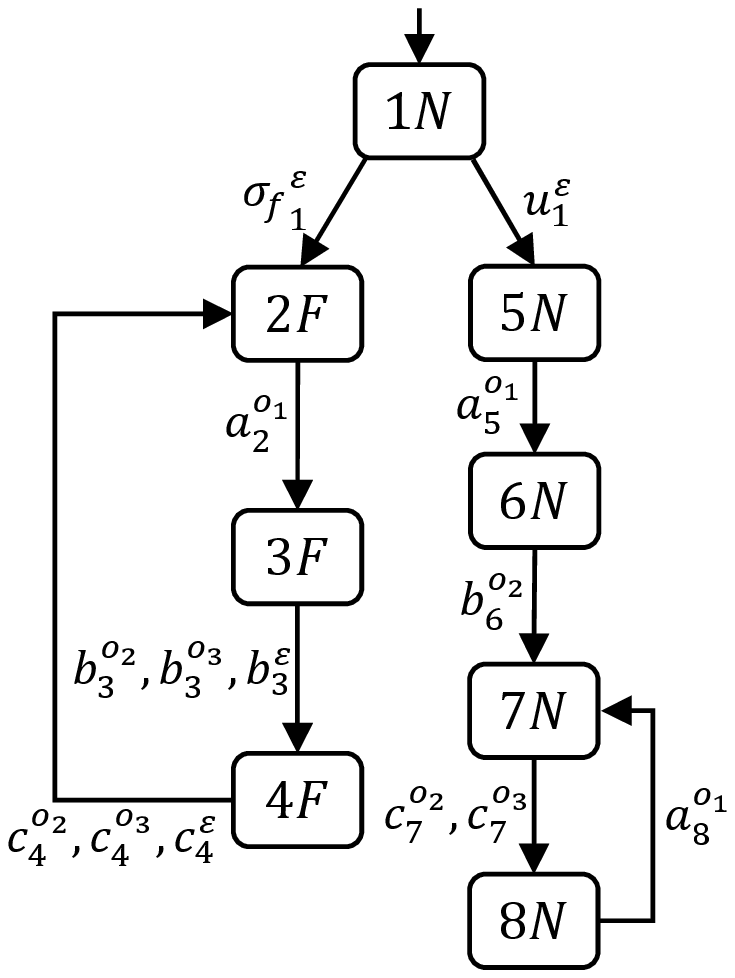}}
	\caption{Example of a non-OF-diagnosable system.} \label{fig:GG}
\end{figure}

\begin{figure}[htbp]
	\centering
	\centering
	\includegraphics[width=5.5cm]{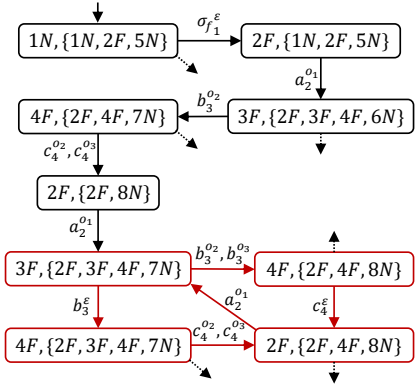}
	\caption{\label{fig:G-2-3} Part of verification structure $V_2$ corresponding to system $G_2$. The uncertain loop is highlighted by red lines.}
\end{figure}

\section{Conclusion}\label{sec:5}
In this work, we revisited the problem of fault diagnosis of DES under non-deterministic observations. 
Compared with existing works, we introduced the notion of output fairness that excludes the case where some possible outputs are disabled permanently, which is formalized as $\Delta'$-fairness assumption by LTL. 
We proposed a new notion called OF-diagnosability to capture the diagnostic requirement under $\Delta'$-fairness assumption.
Necessary and sufficient condition for testing OF-diagnosability was also provided based. 
Our work bridged the discrepancy between the existing definition of diagnosability under non-deterministic observations and the physical setting of non-deterministic observations. 

\bibliographystyle{plain}
	\bibliography{des}

\begin{thebibliography}{10}

\bibitem{baier2008principles}
C.~Baier and J.~Katoen.
\newblock {\em Principles of Model Checking}.
\newblock MIT press, 2008.

\bibitem{biswal2015polynomial}
P.~Biswal and S.~Biswas.
\newblock A polynomial algorithm for diagnosability of fair discrete event
  systems.
\newblock {\em Systems Science \& Control Engineering}, 3(1):307--319, 2015.

\bibitem{boussif2021intermittent}
A.~Boussif, M.~Ghazel, and J.~Basilio.
\newblock Intermittent fault diagnosability of discrete event systems: an
  overview of automaton-based approaches.
\newblock {\em Discrete Event Dynamic Systems}, 31(1):59--102, 2021.

\bibitem{carvalho2012robust}
L.~Carvalho, J.~Basilio, and M.~Moreira.
\newblock Robust diagnosis of discrete event systems against intermittent loss
  of observations.
\newblock {\em Automatica}, 48(9):2068--2078, 2012.

\bibitem{carvalho2021comparative}
L.~Carvalho, M.~Moreira, and J.~Basilio.
\newblock Comparative analysis of related notions of robust diagnosability of
  discrete-event systems.
\newblock {\em Annual Reviews in Control}, 2021.

\bibitem{carvalho2013robust}
L.~Carvalho, M.~Moreira, J.~Basilio, and S.~Lafortune.
\newblock Robust diagnosis of discrete-event systems against permanent loss of
  observations.
\newblock {\em Automatica}, 49(1):223--231, 2013.

\bibitem{cassandras2008introduction}
C.~Cassandras and S.~Lafortune.
\newblock {\em Introduction to Discrete Event Systems}, volume~2.
\newblock Springer, 2008.

\bibitem{hu2021diagnosability}
Y.~Hu, Z.~Ma, Z.~Li, and A.~Giua.
\newblock Diagnosability enforcement in labeled petri nets using supervisory
  control.
\newblock {\em Automatica}, 131:109776, 2021.

\bibitem{kanagawa2015diagnosability}
N.~Kanagawa and S.~Takai.
\newblock Diagnosability of discrete event systems subject to permanent sensor
  failures.
\newblock {\em International Journal of Control}, 88(12):2598--2610, 2015.

\bibitem{lafortune2018history}
S.~Lafortune, F.~Lin, and C.~Hadjicostis.
\newblock On the history of diagnosability and opacity in discrete event
  systems.
\newblock {\em Annual Reviews in Control}, 45:257--266, 2018.

\bibitem{lin2014control}
F.~Lin.
\newblock Control of networked discrete event systems: dealing with
  communication delays and losses.
\newblock {\em SIAM Journal on Control and Optimization}, 52(2):1276--1298,
  2014.

\bibitem{lin2017n}
F.~Lin, W.~Chen, L.~Han, B.~Shen, et~al.
\newblock N-diagnosability for active on-line diagnosis in discrete event
  systems.
\newblock {\em Automatica}, 83:220--225, 2017.

\bibitem{oliveira2020k}
V.~Oliveira, F.~Cabral, and M.~Moreira.
\newblock K-loss robust diagnosability of discrete-event systems.
\newblock {\em IFAC-PapersOnLine}, 53(4):250--255, 2020.

\bibitem{ran2018codiagnosability}
N.~Ran, H.~Su, A.~Giua, and C.~Seatzu.
\newblock Codiagnosability analysis of bounded petri nets.
\newblock {\em IEEE Trans.\ Automatic Control}, 63(4):1192--1199, 2018.

\bibitem{sampath1995diagnosability}
M.~Sampath, R.~Sengupta, S.~Lafortune, K.~Sinnamohideen, and D.~Teneketzis.
\newblock Diagnosability of discrete-event systems.
\newblock {\em IEEE Trans.\ Automatic Control}, 40(9):1555--1575, 1995.

\bibitem{takai2021general}
S.~Takai.
\newblock A general framework for diagnosis of discrete event systems subject
  to sensor failures.
\newblock {\em Automatica}, 129:109669, 2021.

\bibitem{takai2012verification}
S.~Takai and T.~Ushio.
\newblock Verification of codiagnosability for discrete event systems modeled
  by {M}ealy automata with nondeterministic output functions.
\newblock {\em IEEE Trans.\ Aut.\ Control}, 57(3):798--804, 2012.

\bibitem{ushio2015nonblocking}
T.~Ushio and S.~Takai.
\newblock Nonblocking supervisory control of discrete event systems modeled by
  {M}ealy automata with nondeterministic output functions.
\newblock {\em IEEE Trans.\ Aut.\ Control}, 61(3):799--804, 2015.

\bibitem{viana2019codiagnosability}
G.~Viana, M.~Moreira, and J.~Basilio.
\newblock Codiagnosability analysis of discrete-event systems modeled by
  weighted automata.
\newblock {\em IEEE Trans.\ Automatic Control}, 64(10):4361--4368, 2019.

\bibitem{yin2019robust}
X.~Yin, J.~Chen, Z.~Li, and S.~Li.
\newblock Robust fault diagnosis of stochastic discrete event systems.
\newblock {\em IEEE Trans.\ Automatic Control}, 64(10):4237--4244, 2019.

\bibitem{yin2017decidability}
X.~Yin and S.~Lafortune.
\newblock On the decidability and complexity of diagnosability for labeled
  {P}etri nets.
\newblock {\em IEEE Trans.\ Automatic Control}, 62(11):5931--5938, 2017.

\bibitem{zaytoon2013overview}
J.~Zaytoon and S.~Lafortune.
\newblock Overview of fault diagnosis methods for discrete event systems.
\newblock {\em Annual Rev.\ Control}, 37(2):308--320, 2013.

\end{thebibliography}

\end{document}

\newpage
\section*{APPENDIX}

\subsection{Proof of Theorem 1} \label{proof:Theorem 1}
($\Rightarrow$) 
Suppose that there exists a diagnoser $D$ satisfying conditions C\ref{C1} and C\ref{C2}, while, by contradiction, system $G$ is not OF-diagnosability. 
This means that there exist an infinite faulty extended string 
$s \in \mathcal{L}_{e,F}^{\varphi}(G)$ 
such that for any prefix $t\in \textsf{Pre}(s)$, there exists a normal extended string 
$w \in \mathcal{L}_e(G)$ having the same observation with $t$. 
Since diagnoser $D$ satisfies condition C\ref{C2}, 
for any $t \in \textsf{Pre}(s)$, we get $D(\Theta_{\Delta}(t))=0$; otherwise, if $D(\Theta_{\Delta}(t))=1$, we get $D(\Theta_{\Delta}(\omega))=D(\Theta_{\Delta}(t))=1$, which violates condition C\ref{C2}.
Therefore, for any $t \in \textsf{Pre}(s)$, we get $D(\Theta_{\Delta}(t))=0$. As a result, condition C\ref{C1} does not hold for diagnoser $D$, which contradicts the hypothesis.

($\Leftarrow$) Suppose that the system $G$ is OF-diagnosable. We consider a diagnoser $D:M(\mathcal{L}(G)) \to \{0,1\}$ defined by: for any $\omega \in \mathcal{L}_e(G)$
\begin{equation} \label{eq:dia}
	\begin{split}
		&D(\Theta_{\Delta}(\omega))= \\
		&\left\{
		\begin{array}{lr}
			1  &\text{if  } \forall s \in \mathcal{L}_e(G): \Theta_{\Delta}(\omega) = \Theta_{\Delta}(s) \Rightarrow \Sigma_{e,F} \in \Theta_{\Sigma}(s)  \\
			0  &otherwise.
		\end{array}
		\right. \\  
	\end{split}
\end{equation}
We claim that the diagnoser given by equation \eqref{eq:dia} satisfies condition C\ref{C1} and C\ref{C2}.
We first show that the diagnoser satisfies condition C\ref{C2}. For any normal extended language $s \in \mathcal{L}_e(G) \backslash \mathcal{L}_{e,F}(G)$, by equation \eqref{eq:dia}, we have $D(\Theta_{\Delta}(s))=0$, that is, C\ref{C2} holds.
To see that C\ref{C1} holds, we consider any faulty extended string $s \in \mathcal{L}^{\varphi}_{e,F}(G)$. 
By OF-diagnosability, there exists a finite prefix $t \in \textsf{Pre}(s)$ such that any finite normal extended language $\omega$ having the same observation with $t$ is faulty, i.e., $\Sigma_{e,F} \in \Theta_{\Sigma}(\omega)$. Then, by equation \eqref{eq:dia}, we have $D(\Theta_{\Delta}(\omega))=1$, i.e., condition C\ref{C1} also holds.

\subsection{Proof of Lemma 1} \label{proof:Lemma 1}
By contradiction, we assume $s=t\left(  \sigma_V^1\sigma_V^2 \cdots \sigma_V^n  \right)^\omega$ does not satisfy $\varphi_{fair}$, that is, there exists $\sigma \in \Sigma, q \in Q$ such that $s$ satisfies LTL formula $\Box \diamondsuit \sigma_q$, while there exists $\delta \in \Delta' \cap \mathcal{O}(q,\sigma)$ and $s$ does not satisfy formula $\Box \diamondsuit \sigma^\delta_q$. 
By the first formula $\Box \diamondsuit \sigma_q$, we have there is $\delta' \in \mathcal{O}(q,\sigma), \delta' \neq \delta$ such that $\sigma^{\delta'}_q $ appears in $s$ for infinite times, that is, $\sigma^{\delta'}_q $ is concluded in $ \sigma_V^1\sigma_V^2 \cdots \sigma_V^n$, and since the formula $\Box \diamondsuit \sigma^\delta_q$ is false, $\sigma^{\delta}_q$ cannot occur for infinite times, that is, $\sigma^{\delta}_q$ is not contained by $ \sigma_V^1\sigma_V^2 \cdots \sigma_V^n$. 
Thus, the loop $\pi$ is not fair and the hypothesis is contradicted.

\subsection{Proof of Lemma 2} \label{proof:Lemma 2}
Suppose there is an infinite string $s \in \mathcal{L}(V)$ and $s$ satisfies $\varphi_{fair}$. 
Because system $V$ is finite, we can obtain a reachable loop with the structure $\pi = q_V^1 \xrightarrow{\sigma_V^1} q_V^2 \xrightarrow{\sigma_V^2} \cdots \xrightarrow{\sigma_V^{n-1}} q_V^n$ such that the set of events in loop $\pi$ is equal to $\textsf{Inf}(s)$, i.e., $\{\sigma_V\in \Sigma_V: \exists i\in \{1,\cdots,n\},\sigma_V=\sigma_V^i\} = \textsf{Inf}(s)$.
%
%
%
We assume the loop $\pi$ is not fair, by contradiction. 
That is, there exists $ i \in \{1,\cdots, n\}$ and a fair output $\delta$ in the set $\mathcal{O}(q^i, \sigma^i) \cap \Delta'$ such that for all $j \in \{1,\cdots, n\}$, we have either $(q^j,\sigma^j) \neq (q^i,\sigma^i)$ or $\delta^j \neq \delta$.
Let $q=q^i=q^j, \sigma=\sigma^i=\sigma^j$ and $\delta'=\delta^i$. As a result, we know that $\sigma^{\delta'}_q $ is included by $ \textsf{Inf}(s)$ but $\sigma^{\delta}_q $ is not. The formula $\Box \lozenge \sigma_q$ is true, while the formula $\Box \lozenge \sigma^{\delta}_q$ is false. Then, the hypothesis is contradicted.	

\subsection{Proof of Theorem 2} \label{proof:Theorem 2}
($\Leftarrow$) Assume system $G$ is OF-diagnosable. 
That is, for any infinite faulty extended string $s \in \mathcal{L}^\varphi_{e,F}(G)$, it has a finite prefix $t$ such that each extended string $\omega \in \mathcal{L}_e(G)$, which has the same output with $t$, is faulty, i.e., $\Sigma_{e,F} \in \omega$. 

By contradiction, we claim there exists a reachable  FU-loop $\pi$ in $V$. 
Suppose $\pi=q_V^1 \xrightarrow{\sigma_V^1} q_V^2 \xrightarrow{\sigma_V^2} \cdots \xrightarrow{\sigma_V^{n-1}} q_V^n$. 
By Lemma \ref{lem:satisfy OF}, there is an infinite word $s=t\left(  \sigma_V^1\sigma_V^2 \cdots \sigma_V^n  \right)^\omega$ in $V$ satisfying $\varphi_{fair}$, where $t \in \textsf{Pre}(s)$ and $f_V(q_V^0,s)=q^1_V$. 
By properties of verification structure, we know that for any prefix $t' \in \textsf{Pre}(s)$, $f(q^0_V,t') \notin Q^{cer}_V$. 
And since all states in loop $\pi$ are uncertain, for any prefix $t' \in \textsf{Pre}(s)$, there exists a normal extended word $\omega \in \mathcal{L}_e(G)$ having the same output with $t'$. The system is not OF-diagnosable, which contradicts the hypothesis.

($\Rightarrow$) Suppose system $G$ is not OF-diagnosable. 
By Definition \ref{def:OF-diagnosability}, there is an infinite faulty extended string $s \in \mathcal{L}^\varphi_{e,F}(G)$ such that for any prefix $t \in \textsf{Pre}(s)$, a normal extended string $\omega \in \mathcal{L}_e(G)$ has the same output with $t$. 
Because of this normal extended string $\omega$, by  properties of verification structure, any prefix $t$ of the infinite faulty extended string $s$ cannot reach certain states, i.e., $f_V(q^0_V, t) \notin Q_V^{cer}$. 
Because $s$ is a fairly extended string, i.e., $s \models \varphi_{fair}$, by Lemma \ref{lem:loop existence}, there exists a fair loop $\pi = q_V^1 \xrightarrow{\sigma_V^1} q_V^2 \xrightarrow{\sigma_V^2} \cdots \xrightarrow{\sigma_V^{n-1}} q_V^n$, where $\{ \sigma_V^1,\dots,\sigma_V^n \} = \textsf{Inf}(s)$. 
For any faulty prefix $t' \in \textsf{Pre}(s)$, there is a string $t'' \in \Sigma_e^*$ such that $t' t''$ is a prefix of $s$ and reaches the state $q_V^1$, i.e., $f_V(q_V^0,t' t'')=q_V^1$. By properties of verification structure, all states in loop $\pi$ are uncertain, that is, $pi$ is a reachable FU-loop. The proof is complete.